\titleformat{\section}[block]{}{\thesection.}{5pt}{\bf\normalsize}
\titleformat{\subsection}[block]{}{\thesubsection.}{5pt}{\normalsize\emph}
\titleformat{\subsubsection}[block]{}{\thesubsubsection.}{5pt}{}
\DeclareMathOperator{\dd}{\mathrm{d}} 
\newcommand{\1}{\mathbf{1}}
\newcommand{\dev}{{\mathrm{dev}}}
\newcommand{\Lgf}{{\mathrm{L}}}
\newcommand{\Ngf}{{\mathrm{N}}}
\newcommand{\upp}{{\mathrm{upp}}}
\newcommand{\low}{{\mathrm{low}}}
\newcommand{\lib}{{\mathrm{lib}}}
\newcommand{\var}{{\mathrm{var}}}
\newcommand{\cons}{{\mathrm{cons}}}
\newcommand{\env}{{\mathrm{env}}}
\newcommand{\fin}{{\mathrm{fin}}}
\newcommand{\Lvec}{{\mathbf{L}}}
\newcommand{\Nvec}{{\mathbf{N}}}
\newcommand{\thetavec}{\text{$\boldsymbol\theta$}}
\newcommand{\inhom}{{\mathrm{inhom}}}
\DeclareMathOperator{\E}{E}
\newcommand{\x}{{\mathbf{x}}}
\newtheorem{theorem}{Theorem}[section]
\newtheorem{lemma}{Lemma}[section]
\newtheorem{proposition}[theorem]{Proposition}
\title{Global envelope tests for spatial processes}
\author{Mari Myllym{\"a}ki\footnote{Aalto University, Espoo, Finland and Natural Resources Institute Finland (Luke), Vantaa, Finland (email: mari.myllymaki@luke.fi)},
Tom\' a\v s Mrkvi\v cka\footnote{Faculty of Economics, University of South Bohemia, \v{C}esk\'e Bud\v{e}jovice, Czech Republic (email: mrkvicka.toma@gmail.com)},
Pavel Grabarnik\footnote{Institute of Physico-Chemical and Biological Problems in Soil Science of Russian Academy of Sciences, Pushchino, Russia (email: gpya@rambler.ru)},
Henri Seijo\footnote{Department of Computer Science, Aalto University, Espoo, Finland (email: henri.seijo@aalto.fi)},
Ute Hahn\footnote{Department of Mathematics, Aarhus University, Aarhus, Denmark (email: ute@math.au.dk)}}
\date{}
\begin{document}

\maketitle

\begin{abstract}

Envelope tests are a popular tool in spatial statistics, where they are used in goodness-of-fit testing.
These tests graphically compare an empirical function
$T(r)$ with its simulated counterparts from the null model.
However, the type I error probability $\alpha$ is
conventionally controlled for a fixed distance $r$ only, whereas the functions are
inspected on an interval of distances $I$.
In this study, we propose two approaches related to Barnard's Monte Carlo test for building global envelope tests on $I$:
(1) ordering the empirical and simulated functions based on their $r$-wise ranks
among each other, and
(2) the construction of envelopes for a deviation test.
These new tests allow the {\em a priori} selection of the global $\alpha$ and they yield $p$-values.
We illustrate these tests using simulated and real point pattern data.

\noindent {\em Key words}: deviation test; functional depth; global envelope test; goodness-of-fit test; Monte Carlo $p$-value; spatial point pattern
\end{abstract}

\section{Introduction}\label{sec:intro}

Spatial patterns are highly complex statistical objects, so testing their null hypotheses
is non-trivial. The information contained in
the patterns must be summarised in a sensible manner to reflect the relevant features of the phenomenon under study. Typically, a null model is deemed appropriate if it explains the spatial interaction structure in the observed pattern.
Popular statistics used to capture spatial interaction comprise functions $T(r)$ that depend on inter-point distances $r$. For example, Ripley's $K$-function and the closely related $L$-function
are employed in the point process case, and the spherical contact distribution function (or
empty space function) in the random set case \citep[e.g., see][]{Cressie1993, IllianEtal2008, ChiuSKM2013, Diggle2013}. In order to test whether the null model is appropriate, we compare the empirical function $T_{\text{obs}}(r)$ estimated for the observed pattern with the distribution of $T(r)$ under the null model.
However, the theoretical distribution of $T(r)$ is unknown even in the simplest cases, so we have to resort to Monte Carlo methods for statistical inference.
In this study, we consider the popular \emph{envelope tests}, which display test results in a graphical manner.

The first step in a Monte Carlo envelope test is to generate $s$ realizations of the null model and to calculate $T_i(r)$, $i=2,\dots,s+1$, in each case for distances $r$ on a certain interval $I$.
The next step is to define critical bounds for the deviation of the empirical function $T_{\text{obs}}(r)$ ($\equiv T_1(r)$ in the following) from the expected value for each chosen $r$ and to display the results graphically.
It is often overlooked that rigorous statistical inference must be performed for all of the chosen distances $r$ simultaneously. This multiple testing problem, which was ignored in practice for a long time, has recently attracted much attention \citep{LoosmoreFord2006, GrabarnikEtAl2011, BaddeleyEtal2014}.

In spatial statistics, the envelope method was introduced
in two influential independent studies by \citet{Ripley1977} and \citet{BesagDiggle1977}.
The idea of this method is to compare the empirical function to an envelope given
by the $k$-th (often $k=1$) smallest and largest values of the simulated functions
$T_i(r)$ for each $r\in I$.
This method indicates the distances $r$ at which the behaviour of the empirical $T(r)$
may lead to the rejection of the null hypothesis.
This information is important for determining why the data
contradict the tested hypothesis, so this technique has become popular in many applied areas,
such as spatial ecology \citep[e.g.][]{WiegandMoloney2014},
biological sciences \citep[e.g.][]{SchladitzEtal2003, MattfeldtEtal2006, MattfeldtEtal2009, WestonEtal2012},
geography \citep[e.g.][]{CliffOrd1983}, and
astronomy \citep[e.g.][]{VioEtal2007, MartinezEtal2010}
\citep[also see][]{CaseStudies2006}.
Although the envelope method proposed by \citet{Ripley1977} should be interpreted
only as a diagnostic plot, as described in a recent review of the testing of spatial point
patterns by \citet{BaddeleyEtal2014}, it is desirable to use it as a proper statistical test.

In general, we understand an \emph{envelope} as a band bounded by the functions $T_\low(r)$ and $T_\upp(r)$ on an interval $I$.
A \emph{global envelope test} is a statistical test that rejects the null hypothesis if the observed function $T_1$ is not completely inside the envelope, i.e.,
\begin{equation}\label{eq:general envtest}
\varphi_\env(T_1)=\1(\exists\; r\in I:\ T_1(r) \notin (T_\low(r), T_\upp(r))).
\end{equation}
In this study, we consider two approaches for determining the bounds such that
the test \eqref{eq:general envtest} has a controlled overall level (global type I error probability)
for the {\em a priori} selected number of simulations $s$.
In the first approach, the bounds $T_\low(r)$ and $T_\upp(r)$ are constructed directly from the functions $T_i(r)$.
This leads to the {\em rank envelope test} (Section \ref{sec:rank_test}).
In the second approach, the bounds are parameterized by the $r$-wise variance or quantiles
and the resulting global envelope tests are related to a particular form of the so-called deviation test
\citep{Diggle1979} (Section \ref{sec:devtests}).

Determining the critical bounds and calculating the $p$-values in the new tests are
based on ordering or ranking the observed and simulated functional statistics, which creates a test that
is equivalent to the test \eqref{eq:general envtest}.
In essence, the test based on ordering the statistics is the Barnard's Monte Carlo test \citep{Barnard1963, Dufour2006},
which we extend to the functional statistics case.

Ordering the simulated functions is a general approach that is not restricted
to the orderings that correspond to global envelope tests, although these are the most interesting types for us.
This methodology is closely connected to the methods used in functional data analysis,
where measures of functional depth are typically used to describe the centrality of a curve among a set of curves
\citep[e.g.][]{RamsaySilverman2006, RamsayEtal2009, Lopez-PintadoRomo2009, Lopez-PintadoRomo2011}.
Obviously, these measures from functional data analysis can be readily adapted
to testing spatial hypotheses.
By contrast, the tests proposed or discussed in the present study are strictly
spoken tests of hypotheses on the functions $T(r)$, and thus they are not restricted
to summary functions of point processes, but instead they can be applied to any functional
data, or curves.

The remainder of this paper is organised as follows.
We begin by presenting an example that illustrates the usefulness of the envelope test
and we describe the state-of-the-art in hypothesis testing for spatial point processes (Section \ref{sec:motivating_ex_and_current_tech}).
In Section \ref{sec:MCprinciples}, we describe the theory behind (Barnard's) Monte Carlo testing
in a generalized form, which is applicable to functional statistics.
Next, we introduce two types of global envelope tests that differ in terms of boundary construction
(Sections \ref{sec:rank_test} and \ref{sec:devtests}).
In Section \ref{sec:rankmeasures}, we briefly consider other ways of ordering the functional statistics.
Section \ref{sec:composite} addresses the construction of global envelope tests for composite hypotheses.
The results of extensive simulations to study the power of the proposed tests are presented in Section \ref{sec:simstudy}.
Finally, we illustrate the performance of the proposed methodology with interesting ecological and biological examples (Section \ref{sec:datastudy}),
and we conclude with a discussion (Section \ref{sec:discussion}).

The proposed methods are provided as an R library \emph{spptest},
which can be obtained at \\
\texttt{https://github.com/myllym/spptest}, and they will also be available in the R library spatstat \citep{BaddeleyTurner2005}.

\section{A close look at point patterns testing: a motivational example and current techniques}\label{sec:motivating_ex_and_current_tech}

To illustrate the merits of the new global envelope tests, we revisit
the spatial pattern of intramembranous particles, which was studied previously by \citet{SchladitzEtal2003} (Figure \ref{fig:particles}).
The scientific questions addressed comprise whether the particles ``interact'' or not,
what type of interaction is present, and the interpoint distances where the possible interaction occurs.
\citet{SchladitzEtal2003} tested the complete spatial randomness (CSR) hypothesis, but
we employ a Gibbs hard core model (i.e.,\ the Strauss process where the interaction parameter $\gamma$ equals zero, see Section \ref{sec:ppmodels}) as the null model for this pattern because the particles have a spatial dimension.
We can regard the Gibbs hard core point process as the conditional Poisson process where the condition is that every pair of distinct points is at least some (hard core) distance apart.	
We condition the Gibbs hard core model on the number of points, and thus the only parameter in the model is the hard core distance, which we fix to the minimum distance between two particles, i.e., 5.88 pixels in our sample.

\begin{figure}[htbp]
\centering
\makebox{\includegraphics[width=0.5\textwidth]{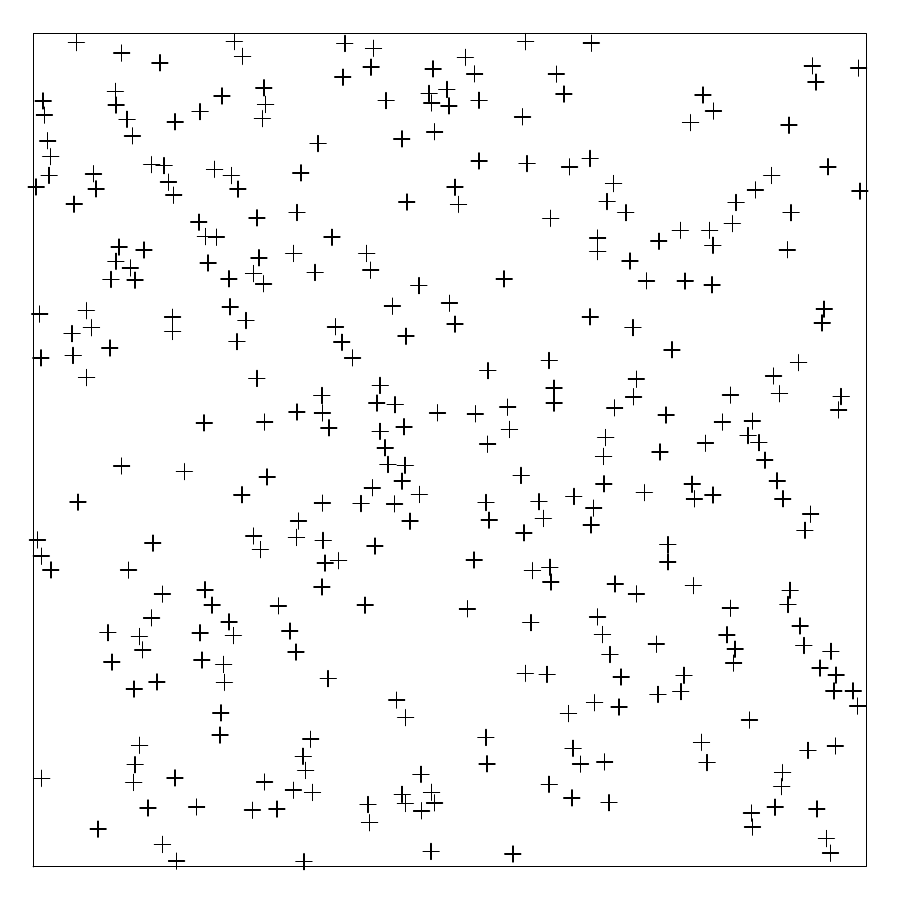}}
\caption{\label{fig:particles}A point pattern of 337 intramembranous particles from an untreated control group observed in an area of 512 $\times$ 512 pixels (336 nm $\times$ 336 nm).}
\end{figure}

To study the interaction between points in our illustrative example, we use
the so-called $L$-function \citep{Ripley1976, Ripley1977, Besag1977}, which
is the most common summary function for point patterns, and we set $I=[6, 125]$
to exclude the smallest interpoint distances within
which no pairs of points can occur, and we take the same upper bound $125$ as \citet{SchladitzEtal2003}.
First, we test the hard core model using conventional methods
and then with the new methods (see Section \ref{sec:datastudy}).

A classical test is the {\em deviation test} introduced by \citet{Diggle1979}.
In this test, the discrepancy between the empirical function $T(r)$
and the corresponding theoretical mean under the null hypothesis is summarized by a
real-valued deviation measure, e.g.,\
the integrated squared difference on $I$.
The same measure $u$ is calculated for all $T_i(r)$ and the measures are ranked.
The null hypothesis is rejected if $u$ for the data
takes an extreme rank, which can be expressed by a $p$-value,
For the pattern in Figure \ref{fig:particles}, we obtained $p=0.03$ for the hard core null model
using the integrated squared difference (with $s=99$).
Thus, the Gibbs hard core model was rejected at the significance level $0.05$ by this test.

In general, a practical shortcoming of the deviation test is that it does not
indicate the distances where the behaviour of the empirical $T(r)$
leads to the rejection of the null hypothesis.
This information is valuable, so it has become popular to compare the empirical function $T_1(r)$ to a pointwise envelope, i.e.,\
the $k$th smallest and largest values of the $s$ simulated functions $T_i(r)$ for each $r\in I$,
as suggested by \citet{Ripley1977} and \citet{BesagDiggle1977}.
Often, this plot is incorrectly taken as a global envelope test in the sense of \eqref{eq:general envtest}.
If the set $I$ is a singleton $\{r_0\}$, the two-sided test rejects the null hypothesis with a probability of $2k/(s+1)$ given that the null hypothesis is true. Thus, the type I error probability is controllable. However, $T$ is usually inspected on an interval $I$, which creates a multiple testing situation. Previously, \citet{Ripley1977} noted that the risk of a type I error has a probability that exceeds $2k/(s+1)$. As discussed in detail by \citet{LoosmoreFord2006} and \citet{GrabarnikEtAl2011},
as well as \citet{BaddeleyEtal2014}, the use of predetermined values for $k$ and $s$ leads to an unpredictable level of error.

A pointwise envelope based on the typical choices $k=1$ and $s=99$ is shown in
Figure \ref{fig:particles-env99} for our example pattern.
This suggests that several distances can be used as evidence against the null model.
However, an approximation of the global type I error probability \citep[see][]{LoosmoreFord2006,GrabarnikEtAl2011} is 0.36,
which is hardly acceptable as a level for a statistical test.

\begin{figure}[htbp]
\centering
\makebox{\includegraphics[width=0.5\textwidth]{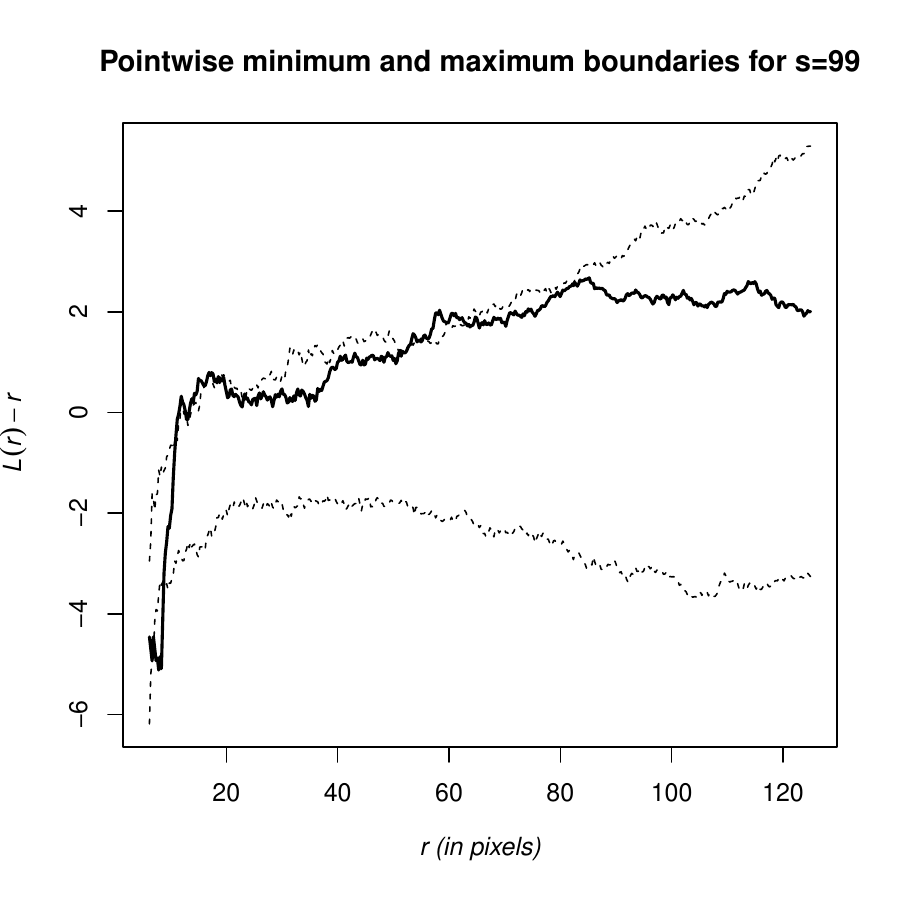}}
\caption{\label{fig:particles-env99}The pointwise minimum and maximum (dashed lines) of the $s=99$ simulated functions $T(r)=\hat{L}(r)$ for the hard core process based on the intramembranous particles data shown in Figure \ref{fig:particles}. The solid line is the empirical function estimated from the data.}
\end{figure}

Some previous studies have considered a controlled global type I error probability for pointwise envelopes, e.g.,
\citet{DavisonHinkley1997} proposed a resampling method for estimating
the global type I error probability of the envelope test for {\em a priori}
selected values of $k$ and $s$.
Independently, \citet{GrabarnikEtAl2011} refined the envelope test in the spirit of
the sequential version of Barnard's Monte Carlo test \citep{BesagClifford1991}
in order to adjust it for simultaneous inference.
Since the global type I error probability of an envelope test for fixed $k$ can
be estimated {\em a posteriori} after a certain number of simulations $s$ have been
completed, the number of simulations $s$ can be adjusted iteratively to obtain
a prescribed global type I error probability $\alpha$.

In the following sections, we propose global envelope tests,
where the significance level $\alpha$ and the number of simulations $s$ can be chosen \emph{a priori}.

\section{Essentials of Barnard's Monte Carlo test}\label{sec:MCprinciples}

In this study, we use a functional test statistic $T(r)$ to test $H_0$.
Typically, the distribution of $T(r)$ is unknown, so Monte Carlo methods have
to be used to determine whether an observed function $T(r)$ is in some sense ``extreme''.
We employ the approach proposed by \citet{Barnard1963} to handle intractable
distributions of test statistics (a similar approach was discussed by \citet{Dwass1957} in the context of a randomized test).

In the following, we consider Monte Carlo goodness-of-fit tests for simple hypotheses. In practical applications, the null hypothesis is often composite, e.g., we want to test whether an observed point pattern is a realization of a parametric model. The testing procedure is then extended by parameter estimation. The composite hypothesis case is addressed in Section \ref{sec:composite}.

The idea behind Barnard's Monte Carlo test is that if $H_0$ is true, the observed $T_1$ and $s$ simulated statistics $T_2,\dots,T_{s+1}$ are identically and independently distributed (in this context, $T$ is some statistic with sample space $S$, but it is not necessarily a function).
Therefore, the null hypothesis can be rejected with the exact probability $\alpha$
if $T_1$ is among the $\alpha(s+1)$ extremal values of $T_i$s, $i=1,\dots,s+1$.
The number of simulations $s$ is selected such that $\alpha (s+1)$ is an integer.

It was noted \citep{BesagClifford1989, Dufour2006} that the independence can be relaxed to exchangeability, which means that the joint probability distribution of the variables $T_1, \ldots, T_{s+1}$ is permutation invariant,
\begin{equation*}
  \Pr\big( (T_1, \dots, T_{s+1})\in A \big)
  = \Pr\big( (T_{\sigma(1)}, \dots, T_{\sigma({s+1})})\in A \big)
\end{equation*}
for any measurable set $A\subseteq S^{s+1}$ and any permutation $\sigma$.
The exchangeability assumption is useful because it allows us to use Barnard's Monte Carlo test, e.g.\
when the test statistic is a deviation measure based on an unknown theoretical expectation \citep[see e.g.][]{Diggle2013, DavisonHinkley1997}.

Provided that the observed and simulated test statistics are independent, or at least exchangeable,
Barnard's Monte Carlo test can be extended to the case where the test statistic $T$ is a functional statistic.
Barnard's Monte Carlo test requires the identification of the $\alpha(s+1)$ most extreme statistics.
This can be done by introducing an ordering on $S$.
In the following, we assume that $T_i\prec T_j$ denotes that $T_i$ is ``more extreme'' than $T_j$.
Furthermore, $T_i \sim T_j$ denotes that $T_i$ is ``as extreme as'' $T_j$, i.e.,\
we assume that there can be ties and that the ordering is weak in this general case.

First, we assume that the functions $T_i$ can be strictly ordered, i.e.,\ all pairs of functions are comparable and for each pair, either $T_i \prec T_j$ or $T_j \prec T_i$ for $i\neq j$. The strict ordering of the functional statistics allows us to build an {\it exact} test based on the following Lemma.

\begin{lemma}\label{lemma:funcstat_test}
Let $T_1, \dots, T_{s+1}$ be exchangeable functional statistics and let $\prec$ be an ordering with
\begin{equation}\label{eq:unique_ordering_condition}
Pr( T_i \prec T_j  \; \text{or}\; T_j  \prec T_i)= 1   \quad \text{for all pairs} \; i \neq j,
\end{equation}
and let
\begin{equation}\label{eq:p-value general}
 p = \frac{1}{s+1}\left(1 + \sum_{i=2}^{s+1} \1(T_i \prec T_1) \right).
\end{equation}
Then, the test
\begin{equation}\label{eq:MCsigtest}
  \varphi(T_1) = \1\big( p \leq \alpha \big) = \1\bigg( 1 + \sum_{i=2}^{s+1}\1( T_i\prec T_1)\  \leq \alpha(s+1) \bigg).
\end{equation}
rejects the null hypothesis at the prescribed significance level, $\alpha$, provided that $\alpha (s+1)$ is an integer.
\end{lemma}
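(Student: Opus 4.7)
The plan is to reduce the lemma to the classical fact that, for exchangeable real random variables with no ties, the rank of any one of them is uniform on $\{1,\ldots,s+1\}$. Once I establish that the rank of $T_1$ under $\prec$ is uniform on $\{1,\ldots,s+1\}$, the claim follows: since $\alpha(s+1)$ is an integer,
\begin{equation*}
\Pr\!\left(1+\sum_{i=2}^{s+1}\1(T_i\prec T_1)\le \alpha(s+1)\right) \;=\; \frac{\alpha(s+1)}{s+1} \;=\; \alpha.
\end{equation*}

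First I would use condition \eqref{eq:unique_ordering_condition} to make ranks well-defined. There are only finitely many pairs $i\neq j$, so a union bound implies that almost surely every pair $(T_i,T_j)$ satisfies exactly one of $T_i\prec T_j$ or $T_j\prec T_i$. Since $\prec$ is an ordering (implicitly transitive and irreflexive), on this almost sure event $\prec$ totally orders the tuple $(T_1,\ldots,T_{s+1})$, and the ranks $R_j := 1+\sum_{i\neq j}\1(T_i\prec T_j)$ form a permutation of $\{1,\ldots,s+1\}$. In particular $R_1$ coincides with the quantity appearing in \eqref{eq:p-value general}.

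Next I would push exchangeability from the $T_i$'s to the $R_j$'s. For any permutation $\sigma$, the rank vector computed from the reordered tuple $(T_{\sigma(1)},\ldots,T_{\sigma(s+1)})$ is $(R_{\sigma(1)},\ldots,R_{\sigma(s+1)})$, since ranks depend only on the multiset of observations, not on their labelling. Exchangeability of the $T_i$'s therefore transfers: $(R_{\sigma(1)},\ldots,R_{\sigma(s+1)})$ has the same law as $(R_1,\ldots,R_{s+1})$. Combined with the fact that this rank vector is a.s.\ a permutation of $\{1,\ldots,s+1\}$, exchangeability forces $\Pr(R_j=k)=1/(s+1)$ for every $j,k$, yielding the uniformity of $R_1$ needed for the displayed calculation above.

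The main obstacle is conceptual rather than computational: one must take care that condition \eqref{eq:unique_ordering_condition} really does suffice to define ranks unambiguously, and that the rank map is equivariant under permutations so that exchangeability of the inputs implies exchangeability of the rank outputs. Both points are routine but must be handled directly in terms of the relation $\prec$, since $\prec$ is a relation on the sample space $S$ rather than a numerical ordering obtained by evaluating a real-valued functional, so one cannot simply invoke sorting by values.
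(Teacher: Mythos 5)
Your proposal is correct and follows essentially the same route as the paper: define the ranks $A_j = 1+\sum_{i\neq j}\1(T_i\prec T_j)$, use condition \eqref{eq:unique_ordering_condition} (together with the fact that $\prec$ is an ordering) to conclude that $(A_1,\dots,A_{s+1})$ is almost surely a permutation of $(1,\dots,s+1)$, deduce from exchangeability that $A_1$ is uniform on $\{1,\dots,s+1\}$, and read off $\Pr(A_1\leq\alpha(s+1))=\alpha$. Your added detail on the permutation-equivariance of the rank map simply makes explicit the paper's remark that the ranks are ``clearly exchangeable.''
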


\begin{proof}
Let $A_j= 1+\sum_{i=1, i \neq j}^{s+1} \1(T_i \prec T_j)$ be the rank of the functional statistic $T_j$ among the $T_i$s.
The unique ordering condition \eqref{eq:unique_ordering_condition} implies that $(A_1,\dots,A_{s+1})$ is a permutation of $(1,\dots, s+1)$ almost surely. Then, for $j=1, \dots, s+1$,
$$
 Pr( A_j = k) = 1/(s+1) \quad \text{for}\; k=1, \ldots, s+1,
$$
because the ranks $A_1, \dots, A_{s+1}$ are clearly exchangeable. Immediately, this implies that
\begin{equation}\label{eq:notiesexact}
  \E \varphi(T_1) =\E\big( \1\big( p \leq \alpha \big)\big) = Pr(A_1 \leq \alpha(s+1)) = \frac{\alpha(s+1)}{s+1} = \alpha.
\end{equation}
\end{proof}

The equation \eqref{eq:p-value general} defines the (Barnard) $p$-value.

It is clear that the choice of ordering depends on the features of a functional statistic $T$ that are of interest.
Some of the orderings discussed in this study are weak,
which means that for some pairs of functions, we cannot decide which
is more extreme.
If ties do not have zero probability in the sample $(T_1, \dots, T_{s+1})\in S^{s+1}$,
then we can break them to obtain an exact test as follows.
Let us introduce the associate variables $M_i, i=1, \dots, s+1$, which can be strictly and totally ordered. Then, a strict ordering $\prec_{\text{Lex}}$ of the pairs $(T_i, M_i)$, $i=1, \ldots, s+1$ exists by means of the lexicographic ordering:
$$
(T_i, M_i) \prec_{\text{Lex}} (T_j, M_j) \Leftrightarrow \{ T_i \prec T_j \,\,\text{or}\,\, (T_i \sim T_j \,\,\text{and}\,\, M_i < M_j)\}.
$$
Now, Lemma \ref{lemma:funcstat_test} can be applied to the pairs $(T_1, M_1), \dots, (T_{s+1}, M_{s+1})$, which are all comparable and strictly ordered. In a particular case, $M_i$ can be continuous random variables that are independent of $T_i$, as considered by \citet[Prop.\ 2.4]{Dufour2006}.

\section{The rank envelope test}\label{sec:rank_test}

\subsection{The ``extreme rank'' depth measure}\label{sec:R_i}

In Sections \ref{sec:intro} and \ref{sec:motivating_ex_and_current_tech}, we discussed the conventional pointwise envelopes,  i.e., envelopes defined by the bounding curves
\begin{equation}\label{kth_envelopes}
T^{(k)}_{\low}(r)= \underset{i=1,\dots,s+1}{{\min}^{k}} T_i(r)
\quad\text{and}\quad
T^{(k)}_{\upp}(r)= \underset{i=1,\dots,s+1}{{\max}^{k}} T_i(r) \quad \text{for } r\in I,
\end{equation}
where $\min^k$ and $\max^k$ denote the $k$-th smallest and
largest values, respectively, and $k=1,2,\dots,\lfloor (s+1)/2\rfloor$.
In general, for a fixed number $s$ of simulations and a fixed $k$, the type I error probability of a global envelope test using these $k$-th lower and upper rank curves is unknown a priori. However, as explained in this section, it is possible to calculate a $p$-value based on the position of the observed curve $T_1$ within the $k$-th envelopes
and to construct a global envelope test with (approximately) controlled error. To this end, the curves $T_i$ are ordered
(we temporarily assume that there are no \emph{pointwise ties}, i.e.,\ ties in the values $T_i(r)$ for $r\in I$)
according to the largest $k$ for which they are still present in the $k$-th envelope, i.e.,
\begin{equation}\label{rank_measure_expl1}
R_i := \max \left\{ k :\  T^{(k)}_{\low}(r) \leq T_i(r) \leq T^{(k)}_{\upp}(r) \ \  \text{for all } r\in I\right\}.
\end{equation}
The value $R_i$, which we call the {\it extreme rank}, is a depth measure that represents
the apparent ``extremeness'' of the curve $T_i$ in the bundle of functions $T_1,\dots, T_{s+1}$.

Formally, we define the \emph{extreme rank} depth measure as follows.
\begin{enumerate}
\item For each $r$, let $R_i^{\uparrow}(r)$ and $R_i^{\downarrow}(r)$, $i=1,\dots,s+1$, denote the ranks of the values
$T_i(r)$, $i=1,\dots,s+1$,
from the smallest value (with rank 1) to the largest (rank $s+1$),
and from the largest (1) to the smallest ($s+1$), respectively.
(In the case of pointwise ties in the values $T_i(r)$, we use the mid-rank
in order to weigh down the influence of distances where many $T_i(r)$ coincide in the test.
An alternative is to use the maximum rank.)
\item Let
\begin{equation}\label{extreme_rank_r}
R_i^*(r) = \min(R_i^{\uparrow}(r), R_i^{\downarrow}(r))
\end{equation}
denote the $r$-wise rank of $T_i(r)$ adjusted to a two-sided variant of the test.
When $R_i^*(r)$ is smaller, $T_i(r)$ is more extreme among the other values for the argument $r$.
\item Then, we define the \emph{extreme rank} as
\begin{equation}\label{rank_measure}
 R_i = \min_{r\in I} R_i^*(r).
\end{equation}
\end{enumerate}
The term \emph{extreme rank} for the depth measure $R_i$ is based on the fact that $R_i$ is the most extreme of the $r$-wise ranks of $T_i(r)$.

In practice, the functions $T_i$ are not evaluated on a continuous interval $I$, but instead a discretized version $I_\fin$ is employed, which makes it feasible to calculate the $R_i$s.

\subsection{Calculation of the $p$-values}\label{sec:rank_p}

The extreme ranks $R_i$ introduce an ordering of the functions
$T_i(r)$ in ascending order from the most extreme to the most typical under the null model.
Thus, it can be used for Monte Carlo testing, as described in Section \ref{sec:MCprinciples}.
However, the set of all $R_i$ contains ties because $\max R_i < s+1$, and
thus the functions $T(r)$ can be only weakly ordered.

A method for building an exact test with tied functions $T_i(r)$ is described in Section \ref{sec:MCprinciples}
and specifically for the rank envelope test in Section \ref{sec:goldfirst}.
Exactness
is desirable from a theoretical viewpoint and to perform power comparisons, but
a practical approach to handling ties is to report a range of $p$-values, similar
to the methods proposed for Barnard's test with discrete test statistics \citep[e.g.,][]{BesagClifford1989, BesagClifford1991}.

The range of $p$-values encompasses the most liberal and conservative $p$-values,
which in the case of the rank envelope test are calculated as
\begin{equation}\label{eq:pvalue_interval}
p_- = \frac{1}{s+1} \sum_{i=1}^{s+1} \1( R_i < R_1 )
\quad \text{and} \quad
p_+ = \frac{1}{s+1} \sum_{i=1}^{s+1} \1( R_i \leq R_1 ),
\end{equation}
respectively. These values provide the lower and upper bounds for the $p$-value of the test.
\begin{proposition}
When $s$ is chosen such that $\alpha(s+1)$ is an integer, the test corresponding to the lower $p$-value,
\begin{equation}\label{eq:test_lib}
\varphi_\lib(T_1)=\1(p_- \leq \alpha)
\end{equation}
is liberal, i.e.,\ $\E\varphi_\lib(T_1)>\alpha$, whereas the test
\begin{equation}\label{eq:test_cons}
\varphi_\cons(T_1)=\1(p_+ \leq \alpha)
\end{equation}
is conservative, i.e.,\ $\E\varphi_\cons(T_1)\leq \alpha$.
\end{proposition}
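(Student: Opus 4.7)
The plan is to compare both $p_-$ and $p_+$ with an exact tie-broken $p$-value constructed via Lemma \ref{lemma:funcstat_test}, and then use monotonicity of the indicator $\1(\cdot \leq \alpha)$.

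First I would invoke the tie-breaking construction described at the end of Section \ref{sec:MCprinciples}: attach to each function $T_i$ an independent continuous auxiliary random variable $M_i$, so that the pairs $(T_i,M_i)$ are exchangeable and the lexicographic ordering $\prec_{\mathrm{Lex}}$ induced by $R_i$ (and $M_i$ as tiebreaker) is strict almost surely. Writing
\begin{equation*}
p_{\mathrm{tie}} = \frac{1}{s+1}\left(1 + \sum_{i=2}^{s+1}\1(R_i < R_1) + \sum_{i=2}^{s+1}\1(R_i = R_1,\; M_i < M_1)\right),
\end{equation*}
Lemma \ref{lemma:funcstat_test} gives $\E\,\1(p_{\mathrm{tie}} \leq \alpha) = \alpha$ whenever $\alpha(s+1)$ is an integer.

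Next I would sandwich: using $\1(R_1<R_1)=0$ in the sum defining $p_-$ and bounding the $M$-indicator above by $\1(R_i=R_1)$, a direct computation yields
\begin{equation*}
p_- + \frac{1}{s+1} \;\leq\; p_{\mathrm{tie}} \;\leq\; p_- + \frac{1}{s+1}\sum_{i=1}^{s+1}\1(R_i = R_1) \;=\; p_+.
\end{equation*}
Hence $\{p_+ \leq \alpha\}\subseteq\{p_{\mathrm{tie}} \leq \alpha\}\subseteq\{p_- \leq \alpha\}$, so $\varphi_\cons \leq \1(p_{\mathrm{tie}}\leq\alpha) \leq \varphi_\lib$ pointwise. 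Taking expectations and using the exact size of the tie-broken test,
\begin{equation*}
\E\,\varphi_\cons(T_1) \;\leq\; \alpha \;\leq\; \E\,\varphi_\lib(T_1).
\end{equation*}

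For the strict inequality in the liberal case, I would observe that $p_{\mathrm{tie}} \geq p_- + 1/(s+1)$ always, so if there is positive probability that $p_- = \alpha$ exactly, then on that event $\varphi_\lib = 1$ while $\1(p_{\mathrm{tie}}\leq\alpha) = 0$; this gap forces $\E\,\varphi_\lib > \alpha$. The main obstacle (and the one subtle point) is justifying that this event has positive probability, which follows from exchangeability together with the fact that $R_i$ takes only finitely many values so ties occur with positive probability at the boundary rank $\alpha(s+1)$; everything else reduces to arithmetic on indicators.
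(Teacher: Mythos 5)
Your route is genuinely different from the paper's: you compare $p_-$ and $p_+$ with an exact tie-broken $p$-value $p_{\mathrm{tie}}$ obtained from the lexicographic ordering of Section \ref{sec:MCprinciples} and Lemma \ref{lemma:funcstat_test}, whereas the paper conditions on the multiset $\{R_1,\dots,R_{s+1}\}$ and computes the conditional rejection probability directly via the counting functions $A$ and $B$. Your sandwich $p_-+\tfrac{1}{s+1}\leq p_{\mathrm{tie}}\leq p_+$ is correct, and it immediately gives $\E\varphi_\cons(T_1)\leq\alpha\leq\E\varphi_\lib(T_1)$; this is arguably cleaner than the paper's argument for the conservative half and makes transparent that $p_-$ and $p_+$ bracket every legitimate tie-broken $p$-value.

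There is, however, a gap in your argument for the \emph{strict} inequality $\E\varphi_\lib(T_1)>\alpha$. You reduce it to showing $\Pr(p_-=\alpha)>0$, and justify this by saying that ties among the $R_i$ occur with positive probability. That does not suffice: $p_-=\alpha$ means $\sum_i\1(R_i<R_1)=\alpha(s+1)$ exactly, and the existence of ties in the multiset of extreme ranks does not guarantee that the count of curves strictly below some attained rank value ever equals the particular integer $\alpha(s+1)$ (e.g., if every attained rank value were shared by exactly two curves, these counts would all be even, and an odd $\alpha(s+1)$ would be missed). The fix stays entirely within your framework: since $\varphi_\lib(T_1)-\1(p_{\mathrm{tie}}\leq\alpha)=\1(p_-\leq\alpha<p_{\mathrm{tie}})$, it is enough to bound $\Pr(p_-\leq\alpha<p_{\mathrm{tie}})$ from below. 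Let $A_1^{\mathrm{Lex}}$ denote the (strict) lexicographic rank of $(T_1,M_1)$; by exchangeability $\Pr(A_1^{\mathrm{Lex}}=\alpha(s+1)+1)=1/(s+1)$, and on this event $p_{\mathrm{tie}}=(\alpha(s+1)+1)/(s+1)>\alpha$ while $\sum_i\1(R_i<R_1)\leq A_1^{\mathrm{Lex}}-1=\alpha(s+1)$, so $p_-\leq\alpha$. Hence $\E\varphi_\lib(T_1)\geq\alpha+1/(s+1)$, provided $\alpha(s+1)+1\leq s+1$, i.e.\ $\alpha<1$. This is the exact analogue of the paper's key step, where the term $k=\alpha(s+1)+1$ supplies the strict inequality in \eqref{eq:prooflibcons:indicator}.
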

\begin{proof}
  It is sufficient to condition on the set $\{R_1,\dots,R_{s+1}\}$ of extreme ranks and to show that
  \begin{equation}\label{eq:prooflibcons:condmean}
  \E(\varphi_\lib(T_1) \mid \{R_1,\dots,R_{s+1}\}) > \alpha \quad \text{and}\quad
   \E(\varphi_\cons(T_1) \mid \{R_1,\dots,R_{s+1}\}) \leq \alpha.
  \end{equation}
   As usual, let $R_{(1)}\leq R_{(2)}\leq\dots\leq R_{(s+1)}$ denote the ordered set and write
  \[
   A(x) = \sum_{i=1}^{s+1} \1(R_i < x) \quad \text{and}\quad    B(x) = \sum_{i=1}^{s+1} \1(R_i \leq x).
  \]
  Then, $A(R_{(k)})<k$ and $B(R_{(k)})\geq k$, and thus
   \begin{equation}\label{eq:prooflibcons:indicator}
   \sum_{k=1}^{s+1}\1({A(R_{(k)})}\leq x) > \sum_{k=1}^{s+1}\1(k\leq x) \quad \text{and}\quad
   \sum_{k=1}^{s+1}\1({B(R_{(k)})} \leq x) \leq \sum_{k=1}^{s+1}\1(k\leq x)
  \end{equation}
    Under $H_0$, the observed $R_1$ takes any place in the ordered set with equal probability; therefore,
  \begin{align*}
    \E(\varphi_\lib(T_1) \mid \{R_1,\dots,R_{s+1}\}) &=  \Pr({A(R_1)}/{(s+1)}\leq\alpha \mid \{R_1,\dots,R_{s+1}\})
    \\&=  \frac{1}{s+1}\sum_{k=1}^{s+1}\1({A(R_{(k)})}\leq\alpha(s+1))
    \\&>  \frac{1}{s+1}\sum_{k=1}^{s+1}\1(k\leq\alpha(s+1))=\alpha.
  \end{align*}
The result follows from averaging over the set $\{R_1, \dots, R_{s+1}\}$.
In the same manner, we can use \eqref{eq:prooflibcons:indicator} to show that $\varphi_\cons$ is conservative.
\end{proof}

A $p$-interval can be safely interpreted as a test at the level $\alpha$
by the following rule: if $p_+\leq \alpha$, the null
hypothesis is clearly rejected, and if
$p_->\alpha$, there is no evidence for rejecting
$H_0$.
When $p_-\leq \alpha <p_+$,
it is not clear whether to reject or not.
From a theoretical viewpoint, a unique $p$-value is desirable.
A standard approach is to choose $p$ randomly in the $p$-interval.
In order to avoid a random decision, we describe a finer ordering in Section \ref{sec:goldfirst},
which is strict in virtually all practical cases.

\subsection{$100\cdot(1-\alpha)$\% global rank envelope}\label{sec:rank_envelope}

The $p$-interval given by the extreme ranks $R_i$
already determines the test result of the rank test (if $\alpha\notin[p_-, p_+)$).
However, the graphical interpretation given by an envelope is essential for determining the
possible reasons for rejecting the null hypothesis.

The significance tests $\varphi_\lib$ and $\varphi_\cons$ given by \eqref{eq:test_lib} and \eqref{eq:test_cons} map directly onto the events that the observed
curve $T_1$ exceeds or at least touches the $k_\alpha$-th rank envelope, where the critical rank $k_\alpha$ is determined as follows:
\begin{equation}\label{eq:kalpha}
  k_\alpha = \max\left\{k:\  \sum_{i=1}^{s+1}\1(R_i<k) \leq \alpha(s+1)\right\}.
\end{equation}
With this critical rank, the corresponding liberal and conservative global envelope tests are
\begin{equation*}
\varphi_{\env,\lib}(T_1)=\1\big(\,\exists \; r\in I: T_1(r)\notin (T^{(k_{\alpha})}_{\low}(r), T^{(k_{\alpha})}_{\upp}(r))\, \big)
\end{equation*}
and
\begin{equation}\label{eq:rank-envelope-test-cons}
\varphi_{\env,\cons}(T_1)=\1\big(\,\exists \; r\in I: T_1(r)\notin [T^{(k_{\alpha})}_{\low}(r), T^{(k_{\alpha})}_{\upp}(r)]\,\big).
\end{equation}

The following theorem states that inference based on the $p$-interval and the global envelope specified by $T^{(k_{\alpha})}_{\low}(r)$ and $T^{(k_{\alpha})}_{\upp}(r)$ are equivalent. Therefore, we can loosely refer to the $k_\alpha$-th envelope as the $100\cdot(1-\alpha)$\% global \emph{rank envelope}.
\begin{theorem}\label{thm:envelope-vs-pinterval}
Let $p_-$ and $p_+$ be as given in \eqref{eq:pvalue_interval}, and $T^{(k_{\alpha})}_{\low}(r)$ and $T^{(k_{\alpha})}_{\upp}(r)$ define the $100\cdot(1-\alpha)$\% global rank envelope. Then, assuming that there are no pointwise ties with probability $1$, it holds that:
\begin{enumerate}[(a)]
 \item $T_1(r) < T^{(k_{\alpha})}_{\low}(r)$ or $T_1(r) > T^{(k_{\alpha})}_{\upp}(r)$ for some $r\in I$
iff $p_+ \leq \alpha$, in which case the null hypothesis is rejected;
 \item $T^{(k_{\alpha})}_{\low}(r) < T_1(r) < T^{(k_{\alpha})}_{\upp}(r)$ for all $r\in I$ iff $p_- > \alpha$, and thus the null hypothesis is not rejected;
 \item and the data function $T_1(r)$ does not exit the envelope, but it coincides with
$T^{(k_{\alpha})}_{\low}(r)$ or $T^{(k_{\alpha})}_{\upp}(r)$
for some $r\in I$ iff $p_- \leq \alpha < p_+$.
\end{enumerate}
(In the case of pointwise ties, $T_1$ may coincide with $T^{(k_{\alpha})}_{\low}$ or $T^{(k_{\alpha})}_{\upp}$ at the distances $r$ with ties, even though $p_- > \alpha$.)
\end{theorem}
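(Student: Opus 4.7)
The plan is to translate each of the three geometric conditions on $T_1$ into a condition on the extreme rank $R_1$, then translate each such condition into the corresponding statement about $p_-$ and $p_+$. The bridge between the two translations is the critical rank $k_\alpha$.

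First I would establish the fundamental equivalence: under the assumption of no pointwise ties, $T_i(r)\in[T^{(k)}_{\low}(r),T^{(k)}_{\upp}(r)]$ holds for all $r\in I$ if and only if $R_i^*(r)\geq k$ for all $r\in I$, which by \eqref{rank_measure} is equivalent to $R_i\geq k$. Applying this to $T_1$ with $k=k_\alpha$ immediately partitions the sample space into three mutually exclusive events: $R_1<k_\alpha$ (some $r$ has $T_1(r)$ strictly outside the closed envelope), $R_1=k_\alpha$ (no exit, but $T_1$ coincides with one of the bounds at some $r$), and $R_1>k_\alpha$ (strict containment in the open envelope everywhere). This is exactly the geometric trichotomy in the statement.

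Next I would connect $R_1$ to the $p$-values via the definition of $k_\alpha$ in \eqref{eq:kalpha}. The set $\{k:\sum_i\1(R_i<k)\leq\alpha(s+1)\}$ is downward closed, hence equal to $\{1,\dots,k_\alpha\}$. Writing $p_+=(s+1)^{-1}\sum_i\1(R_i<R_1+1)$ yields $p_+\leq\alpha\Leftrightarrow R_1+1\leq k_\alpha\Leftrightarrow R_1<k_\alpha$, and similarly $p_->\alpha\Leftrightarrow R_1\notin\{1,\dots,k_\alpha\}\Leftrightarrow R_1>k_\alpha$. The remaining case $R_1=k_\alpha$ is then forced to correspond to $p_-\leq\alpha<p_+$. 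Chaining the two equivalences proves parts 1--3.

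For the final parenthetical remark on pointwise ties, the key observation is that when mid-ranks are used, a tied value of $T_1(r)$ at some distance $r$ with $T^{(k_\alpha)}_\low(r)$ or $T^{(k_\alpha)}_\upp(r)$ still yields $R_1^*(r)\geq k_\alpha+1/2>k_\alpha$ there (because a tie is shared by several curves). Thus $T_1$ may touch the envelope without $R_1$ dropping to $k_\alpha$, so the condition $p_->\alpha$ from part~2 can still hold. The main obstacle, as I see it, is simply being careful about strict versus weak inequalities and about the distinction between the open and closed forms of the envelope appearing in parts 1 and 2; once that bookkeeping is fixed, the argument is essentially a restatement of the definitions.
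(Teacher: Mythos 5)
Your proposal is correct and follows essentially the same route as the paper's proof: both reduce the three geometric cases to the trichotomy $R_1<k_\alpha$, $R_1=k_\alpha$, $R_1>k_\alpha$ and then use the monotonicity of $k\mapsto\frac{1}{s+1}\sum_i\1(R_i<k)$ together with the definition of $k_\alpha$ in \eqref{eq:kalpha} to identify these with $p_+\leq\alpha$, $p_-\leq\alpha<p_+$, and $p_->\alpha$ via $p_-=a(R_1)$ and $p_+=a(R_1+1)$. You merely make explicit the envelope--rank equivalence ($T_1$ inside the closed $k$-th envelope iff $R_1\geq k$) that the paper leaves implicit, and you add a short justification of the parenthetical tie remark that the paper does not prove.
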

\begin{proof}
Let
\begin{equation*}
a(k) = \frac{1}{s+1} \sum_{i=1}^{s+1} \1(R_i < k) \quad\text{for }\ \  k=1,2,\dots
\end{equation*}
denote the proportion of functions $T_i(r)$
that are either below $T^{k}_{\low}(r)$ or exceed $T^{k}_{\upp}(r)$ at some value $r\in I$. Then, by \eqref{eq:pvalue_interval}, $p_+=a(R_1+1)$ and $p_-=a(R_1)$. From \eqref{eq:kalpha}, we find that $a(k_\alpha)\leq\alpha$ and $a(k_\alpha+1)>\alpha$ (for all $k_\alpha \in\{R_1,\dots,R_{s+1}\}$). Thus:
(a) $p_+ \leq \alpha $ iff $R_1 < k_\alpha$, which means that $T_1(r)$ is strictly outside the global rank envelope.
(b) $p_- > \alpha $ iff $R_1 \geq k_\alpha+1$, which means
that $T^{(k_{\alpha})}_{\low}(r) < T_1(r) < T^{(k_{\alpha})}_{\upp}(r)$ for all $r\in I$.
(c) $R_1 = k_\alpha$ iff $p_- = a({k_\alpha}) \leq \alpha$ and $p_+ = a({k_\alpha+1}) > \alpha$, which means that $T_1(r)$ coincides with the global rank envelope for some $r\in I$.
 \end{proof}

Note that for the rank envelope test there is no need to know or estimate
the expectation of the function $T(r)$ under $H_0$. However, for visualization purposes,
it often makes sense to plot the expectation $T_0(r)$ together with $T_1(r)$ and
$(T^{(k_{\alpha})}_{\low}(r), T^{(k_{\alpha})}_{\upp}(r))$.

{\em Remark 1.} For the test functions used in this study, pointwise ties usually appear only when an inappropriate interval $I$ is selected, e.g.,\ if the null model is the Gibbs hard core process and the lower bound of the interval $I$ selected is smaller than the hard core distance.

{\em Remark 2.}
The data function $T_1$ is included in the set of functions
used to define the $k$-th lower and upper curves \eqref{kth_envelopes},
in which case the global envelope corresponds to the $p$-values, as explained above.
In the conventional and refined envelope tests, the envelopes are instead
constructed from simulations only \citep[see][]{Ripley1977, GrabarnikEtAl2011}.
The effect of the inclusion (or exclusion) of $T_1$ from $T_i$
when calculating the global envelope is obviously negligible
when the number of simulations $s$ is large.

\subsection{Number of simulations}

To make the ambiguous result $p_-\leq\alpha<p_+$ unlikely, the expected width of the $p$-interval should be small.
This width depends on the number of simulations $s$, the smoothness properties of $T(r)$, and the value of $R_1$. Based on our investigations presented in Appendix \ref{sec:s}, our general recommendation is to use at least $s=2500$ simulations for testing at the significance level $\alpha=0.05$.
According to the experiments presented in the Appendix \ref{sec:s}, the number of simulations $s \geq 2500$ tends to lead to widths of the $p$-interval that are smaller than 0.01 close to the nominal level $\alpha=0.05$.
We explored this with commonly used summary functions for point processes.

\section{Global scaled maximum absolute difference (MAD) envelope tests}\label{sec:devtests}

The rank envelope test with a reasonably narrow $p$-interval is not feasible if it is not possible to run sufficiently many simulations. In this case, we can employ {\em scaled MAD} envelope tests,
where the critical bounds are of the form
\begin{equation}\label{u_envelopes_st}
T^u_\low(r) = T_0(r) - u\cdot \sqrt{\var(T(r))} \quad\text{and}\quad T^u_\upp(r) = T_0(r) + u\cdot \sqrt{\var(T(r))}
\end{equation}
for the {\em studentized} MAD envelope test and
\begin{equation}\label{u_envelopes_qdir}
T^u_\low(r) = T_0(r) - u\cdot |\underline T(r) - T_0(r)| \quad\text{and}\quad T^u_\upp(r) = T_0(r) + u\cdot |\overline T(r) - T_0(r)|
\end{equation}
for the {\em directional quantile} MAD envelope test.
$T_0(r)$ and $\var(T(r))$ denote the expectation and variance of $T(r)$ under the null model,
and $\overline T(r)$ and $\underline T(r)$ are the $r$-wise 2.5$\%$ upper and
lower quantiles of the distribution of $T(r)$ under $H_0$, respectively.
The expectation $T_0(r)$ and the quantities $\var(T(r))$, $\overline T(r)$ and $\underline T(r)$
can be estimated from simulations of the null model \citep[see e.g.][]{Diggle2013}
if they are not known analytically.

The above envelopes are parameterized with respect to $u$ and the aim is to find $u$ such that
the global envelope test based on these envelopes on $I$ has the desired significance level $\alpha$.

The shape of the scaled MAD envelopes is determined only by the variance or quantiles,
but an advantage is that they do not require a large number of simulations.
These tests have corresponding deviation tests, for which
the number of simulations recommended in previous studies
\citep[see e.g.][]{Hope1968, Marriott1979, Diggle2013} is rather small, i.e.,
$s=99$ and $s=199$ are popular choices.

Next, we recall the deviation test and we then show the correspondence of the scaled MAD envelope tests
to scaled MAD deviation tests.

\subsection{Deviation tests}

Now, we recall how deviation tests work. A deviation measure,
e.g.,\ the maximum absolute difference (MAD) measure
\begin{equation}\label{Dinfty_classical}
u = \max_{r \in I} \Big| T(r) - T_{0}(r) \Big|
\end{equation}
or the integral measure
\begin{equation}\label{DL2_classical}
u = \int_I \left(T(r) - T_{0}(r)\right)^2 \text{d}r,
\end{equation}
is selected and computed for the observed pattern ($u_1$) and
for each simulated pattern ($u_2,\dots,u_{s+1}$).
The deviation measures $u_i$, $i=1,\dots,s+1$, define a strict ordering of the functions $T_i(r)$,
where
\begin{equation*}
T_i\prec T_{j} \Longleftrightarrow u_i > u_{j}.
\end{equation*}
Since the test statistic of the deviation test is continuous and ties have zero probability,
the $p$-value of the deviation test is obtained by \eqref{eq:p-value general}
and the corresponding significance test
has the level $\alpha$ for simple hypotheses for any number of simulations $s$.

A general disadvantage of the deviation test is the lack of graphical interpretation.
However, as we discuss in the following,
it is possible to construct a global envelope for the MAD test.

The global envelope based on the classical MAD measure \eqref{Dinfty_classical}
was introduced by \citet{Ripley1981}.
This envelope has a constant width over the distances $r\in I$.
Therefore, it is not sufficiently flexible to represent the behaviour of functions $T_i(r)$ for different distances
$r \in I$ unless the distribution of $T(r)$ is symmetric and the same for all $r$.

Instead, we consider two scaled MAD measures that take
the unequal variance of $T(r)$ and the asymmetric distribution of $T(r)$ into account, i.e.,
\begin{equation}\label{Dinfty_st}
u = \max_{r \in I} \Big| \frac{ T(r) - T_{0}(r) }{ \sqrt{\var(T(r))} } \Big| ,
\end{equation}
and
\begin{equation}\label{Dinfty_qdir}
u = \max_{r \in I} \left( \1(T(r) \geq T_{0}(r)) \frac{ T(r) - T_{0}(r) }{ |\overline T(r) - T_0(r)| } + \1(T(r) < T_{0}(r)) \frac{ T(r) - T_{0}(r) }{ |\underline T(r)-T_0(r)| } \right),
\end{equation}
respectively.
The corresponding scalings can be used for $T(r) - T_{0}(r)$ in the integral measure \eqref{DL2_classical},	 leading to the studentized and directional quantile integral deviation tests, respectively.
Deviation measures similar to \eqref{Dinfty_st} and \eqref{Dinfty_qdir} have been used previously, e.g.,\ by \citet{BaddeleyEtal2000b}, \citet{MollerBerthelsen2012} and \citet{MyllymakiEtal2015}.
In particular, \citet{MyllymakiEtal2015} proposed using the above scalings in the case of symmetrically and asymmetrically distributed $T(r)$, respectively, in order to make the contributions of the residuals $T(r)-T_0(r)$ for different distances $r\in I$ more equal in the test and to improve the performance compared with the classical measures \eqref{Dinfty_classical} and \eqref{DL2_classical}.
In the following, we propose envelopes based on the {\em scaled} MAD measures.
Note that the scalings in \eqref{Dinfty_st} and \eqref{Dinfty_qdir} are general, whereas other scalings have been proposed for specific functions, e.g.,\ \citet{HoChiu2006} advocated scaling for the $K$-function using adapted distance-dependent intensity estimators.

\subsection{$100\cdot(1-\alpha)$\% global envelope based on the scaled MAD}

The following theorem states that a global envelope test exists that corresponds to the
scaled MAD deviation tests based on \eqref{Dinfty_st} and \eqref{Dinfty_qdir}.
\begin{theorem}
Let $u_{\alpha}$ be the $\alpha(s+1)$-th largest value of the $u_i$s
defined by \eqref{Dinfty_st} or \eqref{Dinfty_qdir}.
For the global envelope test \eqref{eq:general envtest}
where the bounds $T_\low(r)$ and $T_\upp(r)$ are specified by the curves
\eqref{u_envelopes_st} or \eqref{u_envelopes_qdir} with $u = u_{\alpha}$ plugged in,
it holds that
\begin{equation*}
\E \varphi_{\env}(T_1) = \E \varphi_{\dev}(T_1),
\end{equation*}
where
\begin{equation*}
 \varphi_{\dev}(T_1) = \1\bigg( \sum_{i=1}^{s+1}\1(u_i\geq u_1)\  \leq \alpha(s+1) \bigg)
\end{equation*}
for the deviation measures \eqref{Dinfty_st} or \eqref{Dinfty_qdir}, respectively.
\end{theorem}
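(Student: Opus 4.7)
The plan is to establish the identity $\varphi_\env(T_1) = \varphi_\dev(T_1)$ almost surely, which immediately yields the equality of expectations.

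First I would simplify the envelope test. For the studentized case, plugging $u = u_\alpha$ into \eqref{u_envelopes_st} and using the open interval in \eqref{eq:general envtest} shows that $T_1$ escapes the envelope at some $r \in I$ iff $\max_{r \in I} |T_1(r) - T_0(r)|/\sqrt{\var(T(r))} \geq u_\alpha$, which by \eqref{Dinfty_st} is exactly $u_1 \geq u_\alpha$. For the directional quantile bounds \eqref{u_envelopes_qdir} I would split the exit event according to the sign of $T_1(r) - T_0(r)$: an exit above uses the scaling $|\overline T(r) - T_0(r)|$, while an exit below uses $|\underline T(r) - T_0(r)|$. Recognising the resulting expression as the maximum in \eqref{Dinfty_qdir} again produces $u_1 \geq u_\alpha$.

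Next I would rewrite the deviation test. Let $u_{(1)} \geq \dots \geq u_{(s+1)}$ denote the $u_i$'s sorted in decreasing order, so $u_\alpha = u_{(\alpha(s+1))}$ by definition. A short case analysis on the position of $u_1$ in this ordering shows that $\sum_{i=1}^{s+1}\1(u_i \geq u_1) \leq \alpha(s+1)$ iff $u_1 \geq u_\alpha$: when $u_1 > u_{(\alpha(s+1))}$ the count is at most $\alpha(s+1)-1$, when $u_1 = u_{(\alpha(s+1))}$ (absent ties) it equals $\alpha(s+1)$, and when $u_1 < u_{(\alpha(s+1))}$ it strictly exceeds $\alpha(s+1)$. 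Combining the two steps, both tests share the rejection region $\{u_1 \geq u_\alpha\}$, so their indicator functions agree almost surely.

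The main obstacle I anticipate is handling ties among the $u_i$'s cleanly. At a tie involving $u_1$ and $u_\alpha$, the envelope test (which uses the open interval and thus rejects as soon as $u_1 \geq u_\alpha$) and the deviation test (where the $\geq$-count can exceed $\alpha(s+1)$) can disagree on a single sample. This is circumvented by noting that the deviation measures \eqref{Dinfty_st} and \eqref{Dinfty_qdir} are continuous functionals of the underlying $T_i$'s, so under any continuous joint distribution ties have zero probability and the almost-sure agreement is preserved. Beyond this, the only care needed is with the asymmetric scaling in the directional quantile case, where the upper and lower deviation terms must be matched to the correct quantile distance when translating between envelope membership and the measure.
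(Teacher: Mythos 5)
Your proposal is correct and follows essentially the same route as the paper's proof: both arguments reduce the two tests to the common rejection region $\{u_1 \geq u_\alpha\}$ by translating envelope exit into the scaled maximum deviation exceeding $u_\alpha$. Your explicit treatment of the deviation-test counting and of ties (absent almost surely by continuity of the measures) is more careful than the paper's terse "one-to-one correspondence," but it is the same argument.
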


\begin{proof}
Note that a deviation test rejects the null hypothesis if $u_1\geq u_{\alpha}$.
For the measure \eqref{Dinfty_st}, this yields
\begin{equation*}
T_1(r) \leq T_0(r) - u_{\alpha} \sqrt{\text{var}(T(r))} \quad \text{or} \quad
T_1(r) \geq T_0(r) + u_{\alpha} \sqrt{\text{var}(T(r))}
\end{equation*}
and, for \eqref{Dinfty_qdir},
\begin{equation*}
T_1(r) \leq T_0(r) - u_{\alpha} |\underline T(r) - T_0(r)| \quad \text{or} \quad
T_1(r) \geq T_0(r) + u_{\alpha} |\overline T(r) - T_0(r)|.
\end{equation*}
Thus, there is one-to-one correspondence.
\end{proof}

Finding $u$ in \eqref{u_envelopes_st} or \eqref{u_envelopes_qdir} such that the global envelope test has an appropriate level is simple: we just need to take the $\alpha(s+1)$-th largest value among the $u_i$s. 

For the classical case \eqref{Dinfty_classical} without any scaling,
the $100(1-\alpha)$\% global envelope is simply $(T_0(r) - u_{\alpha}, T_0(r) + u_{\alpha})$,
where $u_{\alpha}$ is the $\alpha(s+1)$-th largest value of the $u_i$s
defined by \eqref{Dinfty_classical}.
However, the widths of the scaled MAD envelopes vary with $r$, where
the shape of the envelope around $T_0(r)$ is determined either by the standard
deviation or the quantiles $\underline T(r)$ and $\overline T(r)$.
The studentized envelope is symmetric around $T_0(r)$, whereas
the directional quantile envelope can be asymmetric.

\section{Other functional depth orderings}\label{sec:rankmeasures}

It is obvious that other orderings can be constructed from the $r$-wise ranks
of the functions $T_i(r)$ in addition to that based on the extreme rank.
Other orderings are unlikely to have a global envelope representation, but
they may have higher power than the measure \eqref{rank_measure} in some situations.

In Section \ref{sec:MHRD_MBD}, we consider two functional depth measures from previous studies
and we also include them in our simulation study in Section \ref{sec:simstudy}.
More importantly, in Section \ref{sec:goldfirst}, we introduce a new ordering, which is closely connected to the extreme rank ordering. We consider this {\em rank length} or {\em rank count} ordering in detail and show its connection with the rank envelope test.

\subsection{Rank length and rank count orderings}\label{sec:goldfirst}

Instead of taking the most extreme pointwise rank ($R_i$) that a curve $T_i$ attains in the interval $I$ as a measure of the centrality of the curve within the sample, we could use a more comprehensive summary of the pointwise ranks. Thus, we propose to consider ``how often'' the ranks $R^*_i(r)=k$, $k=1,\dots, \lfloor(s+2)/2\rfloor$ have been reached by the curve $T_i$. In a continuous setting, this can be expressed by a vector $\Lvec_i$ of \emph{rank lengths} $L_{ik}$,
\begin{equation}\label{eq:rank lengths}
\Lvec_i=(L_{i1},\dots, L_{i\, \lfloor(s+2)/2\rfloor}),\qquad L_{ik}=\int_I \1(R_i^*(r)=k) \dd r.
\end{equation}
The reverse lexical ordering of the rank length vectors $\Lvec_i$,
\begin{equation}\label{eq:rank lengths lexical}
\Lvec_i \prec\Lvec_j \quad \Longleftrightarrow\quad
  \exists \; n\leq \lfloor(s+2)/2\rfloor: L_{i k} = L_{j k}\ \forall \ k < n,\  L_{i n} > L_{j n}
\end{equation}
induces a natural ordering of the curves $T_i$, where
a curve that scores a larger number of small pointwise ranks is more extreme.
In practice, it is very difficult to calculate the integrals $L_{ik}$, but instead we can evaluate the curves at discrete points $r\in I_\fin$ and count the ranks. Thus, we obtain a vector $\Nvec$ of \emph{rank counts},
\begin{equation}\label{eq:rank counts}
\Nvec_i=(N_{i1},\dots, N_{i\, \lfloor(s+2)/2\rfloor}),\qquad N_{ik}=\sum_{r \in I_\fin} \1(R_i^*(r)=k).
\end{equation}

Obviously, both the extreme rank lengths and counts refine the ordering given by the most extreme ranks, $R_i = \min_{r\in I} R_i^*(r)$ or
$R_i = \min_{r\in I_\fin} R_i^*(r)$, and it holds that
\begin{equation}\label{eq: R smaller implies LN smaller}
  R_i < R_j \implies \Lvec_i \prec \Lvec_j\  \text{and}\   \Nvec_i \prec \Nvec_j.\
\end{equation}
The converse is not true.

Since the rank lengths $L_{ik}$ are continuous, they define a strict ordering of the functions $T_i$.
Thus, the $p$-value, $p_\Lgf$, is obtained in an analogous manner to \eqref{eq:p-value general},
and the corresponding significance test at the nominal level $\alpha$,
\begin{equation*}
\varphi_\Lgf(T_1)=\1(p_\Lgf \leq\alpha), 
\end{equation*}
has the correct level, according to Lemma \ref{lemma:funcstat_test}.
Ties may occur for the discretized version of the rank lengths, i.e.,\ for the rank counts $\Nvec_i$.
However, the ties in $\Nvec_i$ are also very unlikely if the set $I_\fin$ is not very small.
If ties occur, they can be broken by randomization.
Thus, in practice, the Monte Carlo $p$-value $p_\Ngf$ is also obtained in an analogous manner to \eqref{eq:p-value general}.
Furthermore,
\begin{equation*}
\E \left(\lim_{|I_\fin|\rightarrow\infty} \varphi_\Ngf(T_1)\right)=\E\left(\lim_{|I_\fin|\rightarrow\infty} \1(p_\Ngf \leq\alpha)\right) = \E \varphi_\Lgf(T_1) = \alpha.
\end{equation*}

The following trivial Proposition \ref{prop:goldfirst breaks ties} suggests that the $p$-values obtained by rank count ordering may be used as a refinement of the $p$-interval of the rank envelope test based on the extreme ranks $R_i$.

\begin{proposition}\label{prop:goldfirst breaks ties}
  Let $p_-$ and $p_+$ denote the $p$-values defined for extreme rank ordering in \eqref{eq:pvalue_interval}. Then, it is always the case that
  \begin{equation*}
    p_- < p_N \leq p_+.
  \end{equation*}
\end{proposition}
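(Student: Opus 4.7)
The plan is to rewrite the three $p$-values in a common form that isolates the role of the indicator sums over $i=2,\dots,s+1$, and then to use the monotonicity relation \eqref{eq: R smaller implies LN smaller} between the extreme-rank ordering and the rank-count ordering, together with its contrapositive, to chain the inequalities.

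First, I would observe that the $i=1$ term in $p_-$ vanishes and the $i=1$ term in $p_+$ equals $1$, so
\[
p_- = \frac{1}{s+1}\sum_{i=2}^{s+1}\1(R_i<R_1), \qquad
p_+ = \frac{1}{s+1}\Bigl(1+\sum_{i=2}^{s+1}\1(R_i\leq R_1)\Bigr),
\]
while the rank-count $p$-value is, by the general definition \eqref{eq:p-value general} applied to the strict ordering on $\Nvec_i$ (ties broken, if any, by randomization as explained in Section \ref{sec:goldfirst}),
\[
p_\Ngf = \frac{1}{s+1}\Bigl(1+\sum_{i=2}^{s+1}\1(\Nvec_i\prec\Nvec_1)\Bigr).
\]

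Next, the implication \eqref{eq: R smaller implies LN smaller} immediately gives $\1(R_i<R_1)\leq \1(\Nvec_i\prec\Nvec_1)$ for each $i$, so
\[
\sum_{i=2}^{s+1}\1(R_i<R_1)\ \leq\ \sum_{i=2}^{s+1}\1(\Nvec_i\prec\Nvec_1).
\]
Dividing by $s+1$ and noting that the right-hand side of $p_\Ngf$ carries an extra $+1$, this yields $p_-<p_\Ngf$, with the strict inequality coming for free from that $+1$ regardless of the indicator counts. For the upper bound, I would invoke the contrapositive of \eqref{eq: R smaller implies LN smaller}: if $R_i>R_1$, then $\Nvec_1\prec\Nvec_i$, which is incompatible with $\Nvec_i\prec\Nvec_1$ (the ordering on $\Nvec$ being strict). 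Hence $\Nvec_i\prec\Nvec_1$ forces $R_i\leq R_1$, so
\[
\sum_{i=2}^{s+1}\1(\Nvec_i\prec\Nvec_1)\ \leq\ \sum_{i=2}^{s+1}\1(R_i\leq R_1),
\]
and adding $1$ and dividing by $s+1$ gives $p_\Ngf\leq p_+$.

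There is no real obstacle here; the only thing to be careful about is the handling of ties in $\Nvec_i$ and the strict versus weak directions of the two orderings. I would make explicit that $\Nvec_i\prec\Nvec_j$ is defined by the strict reverse-lexical relation \eqref{eq:rank lengths lexical}, so incomparability with equality means $\Nvec_i=\Nvec_j$; this guarantees that $\Nvec_i\prec\Nvec_1$ and $\Nvec_1\prec\Nvec_i$ cannot both hold, which is exactly what the contrapositive step needs.
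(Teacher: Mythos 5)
Your proof is correct. The paper itself offers no argument for this proposition (it is simply labelled ``trivial''), and your chain of inequalities via \eqref{eq: R smaller implies LN smaller} and its contrapositive --- using that the reverse-lexical relation \eqref{eq:rank lengths lexical} is a strict order, so $\Nvec_i\prec\Nvec_1$ and $\Nvec_1\prec\Nvec_i$ cannot both hold --- is exactly the intended reasoning; the extra $+1$ from the $i=1$ term in \eqref{eq:p-value general} correctly delivers the strict lower inequality. The only cosmetic gap is that when ties $\Nvec_i=\Nvec_1$ are broken by randomization, the count defining $p_N$ may include some tied indices, but these satisfy $R_i=R_1$ by \eqref{eq:goldfirst eq imply extreme}, so both bounds survive unchanged.
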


Furthermore, the following theorem states that asymptotically, the tests based on the rank counts and the rank envelope test are equivalent.
\begin{theorem}\label{thm:gf-envelope}

For the tests $\varphi_\Ngf(T_1)=\1(p_\Ngf \leq\alpha)$ and \eqref{eq:rank-envelope-test-cons} based on the discretized $r$-interval $I_\fin$,
\begin{equation*}
\varphi_{\env,\cons}(T_1)=\1\big(\,\exists \; r\in I_\fin: T_1(r)\notin [T^{(k_{\alpha})}_{\low}(r), T^{(k_{\alpha})}_{\upp}(r)]\,\big),
\end{equation*}
it holds that
  \begin{equation}\label{eq:env smaller gf}
  \E\varphi_{\env,\cons}(T_1) \leq \E\varphi_N(T_1)
   \end{equation}
  \text{ and in the limit as the number of simulations $s\to\infty$}
  \begin{equation}\label{eq:env goesto gf}
    \E\varphi_{\env,\cons}(T_1) \to \E\varphi_N(T_1) .
    \end{equation}
\end{theorem}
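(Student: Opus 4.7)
The plan is to reduce both parts of the theorem to comparisons between $p_\Ngf$ and the extreme-rank $p$-interval endpoints $p_-,p_+$, exploiting Proposition \ref{prop:goldfirst breaks ties} and the envelope-vs-$p$-value dictionary of Theorem \ref{thm:envelope-vs-pinterval}.

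For the inequality \eqref{eq:env smaller gf}, I would first apply Theorem \ref{thm:envelope-vs-pinterval}(a), which identifies
\[
\varphi_{\env,\cons}(T_1)=1 \iff T_1 \text{ strictly exits the }k_\alpha\text{-th envelope} \iff p_+ \leq \alpha,
\]
while by definition $\varphi_\Ngf(T_1)=1 \iff p_\Ngf \leq \alpha$. Proposition \ref{prop:goldfirst breaks ties} states $p_\Ngf \leq p_+$, so $\{p_+\leq\alpha\}\subseteq\{p_\Ngf\leq\alpha\}$. Hence $\varphi_{\env,\cons}\leq\varphi_\Ngf$ pointwise, and \eqref{eq:env smaller gf} follows on taking expectations.

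For the limit \eqref{eq:env goesto gf}, I would analyse the nonnegative gap
\[
\E\varphi_\Ngf(T_1) - \E\varphi_{\env,\cons}(T_1) = \Pr(p_\Ngf \leq \alpha < p_+).
\]
Using $p_- < p_\Ngf$ from Proposition \ref{prop:goldfirst breaks ties} together with the identification $\{p_-\leq\alpha<p_+\}=\{R_1=k_\alpha\}$ extracted from the proof of Theorem \ref{thm:envelope-vs-pinterval}(c), it suffices to show that $\Pr(R_1=k_\alpha)\to 0$ as $s\to\infty$. To this end I would condition on the unordered multiset $\{R_1,\dots,R_{s+1}\}$, which determines $k_\alpha$. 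Since each $R_i$ is a symmetric function of $(T_1,\dots,T_{s+1})$, exchangeability of the $T_i$ under $H_0$ transfers to the $R_i$; conditionally on the multiset, the identity of the index occupied by $T_1$ is uniform, giving
\[
\Pr\bigl(R_1=k_\alpha \,\big|\, \{R_1,\dots,R_{s+1}\}\bigr) = \frac{\#\{i:R_i=k_\alpha\}}{s+1}\leq \frac{2|I_\fin|}{s+1},
\]
where the last bound is the observation already used to prove \eqref{eq:order_s_result}: any fixed extreme rank is attained by at most $2|I_\fin|$ of the curves, since such a curve must realise the $k$-th smallest or largest pointwise value at one of the $|I_\fin|$ points of $I_\fin$. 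Averaging and sending $s\to\infty$ with $|I_\fin|$ fixed closes the gap and establishes \eqref{eq:env goesto gf}.

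The main step requiring care is the exchangeability-and-symmetry argument for the $R_i$ that underpins the conditional-uniform claim; once that is secured, the bound behind \eqref{eq:order_s_result} immediately yields the $O(1/s)$ decay needed. A minor technical caveat is pointwise ties in the raw values $T_i(r)$: these are excluded with probability one under the standing hypothesis of Theorem \ref{thm:envelope-vs-pinterval}, and in any event the paper's mid-rank convention neutralises them, so the argument passes through unchanged.
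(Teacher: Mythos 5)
Your proposal is correct and follows essentially the same route as the paper's Appendix~A proof: the inequality comes from $p_\Ngf\leq p_+$ (equivalently, from $R_i<R_j\Rightarrow\Nvec_i\prec\Nvec_j$), and the limit comes from localising the discrepancy to the event $\{R_1=k_\alpha\}$ and bounding its probability by $2|I_\fin|/(s+1)$ via exchangeability and the conditional-uniformity argument. Your only cosmetic deviation is to package the first step through Proposition~\ref{prop:goldfirst breaks ties} and Theorem~\ref{thm:envelope-vs-pinterval} rather than manipulating the rank sums directly, which is the same underlying content.
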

\begin{proof}
 See Appendix \ref{sec:proof_thm6.2}.
\end{proof}

\subsection{Modified half-region and band depths}\label{sec:MHRD_MBD}

The modified band depth (MBD) and the modified half-region depth (MHRD)
were introduced by \citet{Lopez-PintadoRomo2009, Lopez-PintadoRomo2011},
who used these depths to find the deepest (most central) curves among a set of curves.
The MHRD of the curve $T_i(r)$ among
all the curves $T_i(r)$, $i=1,\dots, s+1$, is proportional to
\begin{equation}\label{mhrd_measure}
MHRD_i = \min\left(\int_{r \in I} R_i^{\uparrow}(r) \, \mathrm{d}r, \, \int_{r \in I} R_i^{\downarrow}(r) \, \mathrm{d}r\right),
\end{equation}
where $R_i^{\uparrow}(r)$ and $R_i^{\downarrow}(r)$ are the ranks of $T_i(r)$ as in Section \ref{sec:R_i} (and the maximum treatment of the pointwise ties is taken).
\citet{Lopez-PintadoRomo2011} introduced this depth \eqref{mhrd_measure} as a faster alternative
to MBD, which has a similar integral nature to MHRD.
In a special case which was used in our study, the MBD measure can be interpreted as
the average proportion of $r \in I$ for which $T_i(r)$ lies between
a pair of $T_j(r), j=1,\dots,s+1$, averaged over all distinct pairs.

The above depths introduce orderings for the functions $T_i(r)$:
\begin{equation*}
T_i\prec T_{i'} \Longleftrightarrow MBD_i < MBD_{i'} \quad\text{or}\quad MHRD_i < MHRD_{i'}.
\end{equation*}
Thus, they can be used for testing, where
the $p$-value can be calculated according to \eqref{eq:p-value general}
since ties are unlikely to occur.
However, to the best of our knowledge,
it is not possible to construct a global envelope from the curves $T_i(r)$ such that the envelope test
would correspond to the test based on the MHRD or MBD, although
different visualization is possible by utilising the methods in \citet{SunGenton2011}.

\section{Global envelopes for composite hypotheses}\label{sec:composite}

Often, we are not interested in testing the simple hypothesis that an observed point pattern is a realization of a process with a completely specified distribution, but instead we are concerned with the composite hypothesis that it derives from a parametric model with unknown parameters. Usually, we would fit the model to the observation and use the fitted model in the Monte Carlo test. This ``plug-in'' approach can lead to conservative tests, as discussed based on simulation experiments in Section \ref{sec:simstudy_composite_hypothesis} and in \citet{BaddeleyEtal2014}.
In practice, the conservativeness of these tests is often accepted, but a problem is the decreased power of the test in this case. This problem of conservativeness is still regarded as an unresolved research problem \citep[e.g.][]{BayarriBerger2000, RobinsEtal2000}, although several strategies have been designed to address this problem \citep[e.g.][]{RobinsEtal2000, BrooksEtal1997}. We adapt the method of \citet{DaoGenton2014} to our global envelope tests.

\citet{DaoGenton2014} proposed a method for correcting the level of the classical deviation test when testing composite hypotheses. Their method involves estimating the distribution of Monte Carlo $p$-values based on further simulations  given the parameter estimates and using the empirical $\alpha$-quantile of this distribution, $\alpha^*$, to construct the significance test.
According to \citet{DaoGenton2014}, the test
\begin{equation}\label{eq:DaoGentontest}
  \varphi^*(T_1)=\1(p \leq \alpha^*)
\end{equation}
has the correct level $\alpha$.
Their method is not restricted to tests based on deviation measures, but instead it could be used with any statistics that lead to a unique ordering of point patterns.

In particular, the method can be applied to the scaled MAD measures
\eqref{Dinfty_st} and \eqref{Dinfty_qdir}, and thus to the scaled MAD envelope tests.
Thus, after finding $\alpha^*$, we need to find $u_{\alpha^*}$, which is
the $\alpha^*(s+1)$-th largest value among the $u_i$s.
Then, the corresponding {\em adjusted} studentized and directional quantile envelopes can be obtained
by plugging $u_{\alpha^*}$ in \eqref{u_envelopes_st} and \eqref{u_envelopes_qdir}, respectively.

The method can also be adapted to the rank envelope test, which requires the estimation of
the distribution of the extreme ranks $R_1$ under the composite hypothesis based on further simulations
and determining the empirical $\alpha$-quantile of this distribution.
A detailed algorithm for this procedure is given in Appendix \ref{sec:adjenv}.
This gives the critical rank $k_{\alpha^*}$ used to construct the \emph{adjusted rank envelope}, which is
limited by the curves $T_\low^{(k_\alpha^*)}$ and $T_\upp^{(k_\alpha^*)}$.
The test based on the adjusted rank envelope can be safely regarded as a test at the level $\alpha$, where
the null hypothesis is clearly rejected if the data function $T_1$ goes strictly outside the envelope,
and if $T_\low^{(k_\alpha^*)} < T_1 < T_\upp^{(k_\alpha^*)}$ for all $r \in I$, there is no evidence to reject the null hypothesis.

Note that the adjustment without further refinements requires $s^2$ simulations to achieve the same precision as an uncorrected plug-in Monte Carlo test based on $s$ simulations. Therefore, it is not very practical for the rank envelope test, where $s$ has to be large.

To reduce the running time of the algorithm for the rank envelope test, it is possible to approximate $\alpha^*$ from the $\alpha$-quantile of the additional $s$ $p_N$-values of the rank counts test using
a lower number of simulations $s_2 < s$, and then to find $k_{\alpha^*}$ by plugging $\alpha^*$ in  \eqref{eq:kalpha}.
This approximation is applied to an example case in Section \ref{sec:ex_tree_data} and it is compared with the ``no approximation'' case, although examining the quality of the approximation in various situations will be the subject of a future study.

\section{Simulation study}\label{sec:simstudy}

We studied the performance of the various tests listed in Table \ref{Table:testnames}.
First, we studied the power of the tests and the effect of the choice of the interval $I$ on the power under different scenarios (Section \ref{sec:simstudy_I}).
In this study, the CSR null hypothesis was tested against simulated regular and clustered point patterns (see Section \ref{sec:ppmodels} for description of the models).
As functional test statistics, we used non-parametric estimators of the $L$-function
\citep{Ripley1976, Ripley1977, Besag1977} (with translational edge correction) and the $J$-function \citep{LieshoutBaddeley1996} (with no edge correction), see Appendix \ref{sec:summary_functions} for further details.
Both functions are often used for detecting the clustering or regularity of point patterns.

Second, we considered practical composite hypothesis testing with plug-in parameter estimates using various point process models (Section \ref{sec:simstudy_composite_hypothesis}). In particular, we studied how the parameter estimation step affects the empirical type I error probabilities.
Note that a solution to correct the significance level of composite hypothesis tests is discussed in Section \ref{sec:composite} and applied to real data in Sections \ref{sec:ex_tree_data} and \ref{sec:ex_fallen_trees}.
We also conducted a small simulation study on the power of the tests in the presence of inhomogeneity in the tested/simulated models (Section \ref{sec:simstudy_inhomog}).

For each test in these experiments we used $s=1999$ simulations.
The rejection rates were calculated for the tests based on the $p$-value (rejection if $p \leq \alpha$), using the unique $p$-value obtained by rank count ordering for the rank envelope test.
In simulations not shown here, the rank count test based on $s=1999$ simulations yielded virtually the same proportion of rejections as the test based on $s=2499$ simulations (the number which we recommended for testing by the rank envelope test).

\begin{table}
\caption{\label{Table:testnames}Different tests and their abbreviated names (MAD = maximum absolute difference).}
\centering
\fbox{%
\begin{tabular}{ll}
Test & Short name \\\toprule
rank envelope test (rank count) & rank \\
(unscaled) MAD test \eqref{Dinfty_classical} & max \\
studentized MAD test \eqref{Dinfty_st}  & max\,|\,st \\
directional quantile MAD test \eqref{Dinfty_qdir} & max\,|\,qdir \\
integral deviation test \eqref{DL2_classical} & int \\
studentized integral deviation test & int\,|\,st  \\
directional quantile integral deviation test & int\,|\,qdir \\
modified half-region depth & MHRD  \\
modified band depth & MBD \\
\end{tabular}}
\end{table}

\subsection{Point process models}\label{sec:ppmodels}

Formally, a point pattern is a finite set $\x = \{x_1, \dots, x_n\}$ of
locations of $n$ objects in a region $D$, which is a compact
subset of $\mathbb{R}^d$.
We want to test the hypothesis $H_0$ that the observed point pattern $\x_W$
is a realization of a simple point process $X$ in a window $W \subseteq D$.
In the simulations, the window is a unit square.

The model for CSR is the Poisson
process with intensity $\lambda$.
The CSR hypothesis can be considered simple if the number of points is fixed
and the binomial point process is simulated \citep[see e.g.][]{Diggle2013}, which are the case in this study.
The other point process models used in the following experiment
are the regular Strauss process and the Mat{\'e}rn cluster processes explained in the following.

\begin{itemize}
\item {\em Strauss process}. The parameters of the pairwise interaction
point process Strauss($\beta$, $\gamma$, $R$) are the first-order term $\beta>0$,
the interaction radius $R>0$ and $0\leq \gamma\leq 1$, which controls the strength of interaction.
To obtain point patterns that behave like patterns derived from a stationary model,
the point patterns in $W$ were obtained by simulating the process in an extended window
and taking the sub-point patterns in $W$ as samples.
The simulations in the present study were performed in a unit square
(extending by $0.25$ in each direction) using a perfect simulation algorithm
implemented in the R library spatstat \citep{BerthelsenMoller2003, BaddeleyTurner2005}.

\item {\em Mat{\'e}rn cluster process}. Mat{\'e}rn type cluster processes are
constructed in two steps. First, a ``parent'' point process is
generated and clusters of ``daughter'' points are then
formed around the parent points.
In the classical {\em Mat{\'e}rn cluster process}
MatClust($\lambda_p$, $R_d$, $\mu_d$), the parent points form the
Poisson process with intensity $\lambda_p$ and the daughter points
are distributed uniformly in discs of radius $R_d$ centred at the
parent points, where their numbers follow a Poisson distribution with
mean $\mu_d$.

\item The {\em non-overlapping Mat{\'e}rn cluster process} NoOMat\-Clust($\lambda_p$, $R_d$, $\mu_d$, $R$)  differs from the Mat{\'e}rn cluster process only in that the parent points follow the Strauss($\lambda_p$, $0$, $R$) process, i.e.,\ a {\em hard core process} with minimum inter-point distance $R$. The condition $R > R_d$ ensures that clusters are non-overlapping.
\item The {\em mixed Mat{\'e}rn cluster process} (MixMatClust) is a superposition of two Mat{\'e}rn cluster processes.
\end{itemize}

The parent points of the Mat{\'e}rn, non-overlapping Mat{\'e}rn and mixed Mat{\'e}rn processes were simulated in an extended window $[-R_d,1+R_d]\times[-R_d,1+R_d]$, because a daughter point inside $W$ could have its parent point outside $W$. 
Further, the parent points of the non-overlapping Mat{\'e}rn cluster process were first simulated in an larger extended window (extending by 0.25 in each direction) similarly as the Strauss processes of this paper.
The final daughter point patterns were clipped to $W=[0,1]\times[0,1]$.

Figure \ref{fig:intervalIstudy_expatterns} shows typical realizations from the Strauss and Mat{\'e}rn cluster process models used in the simulation experiment of Section \ref{sec:simstudy_I}, and from the non-overlapping and mixed Mat{\'e}rn cluster processes used in the simulation study of Section \ref{sec:simstudy_composite_hypothesis}.

\begin{figure}[htbp]
\centering
\makebox{\includegraphics[width=0.35\textwidth]{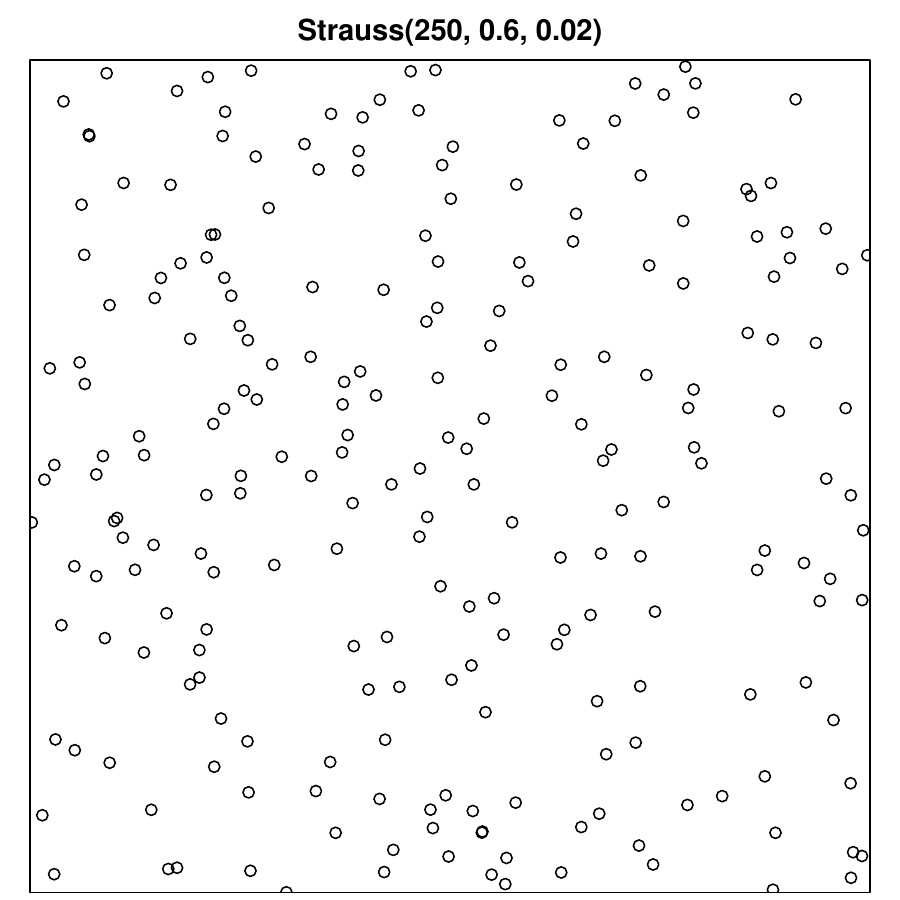} \hfill \includegraphics[width=0.35\textwidth]{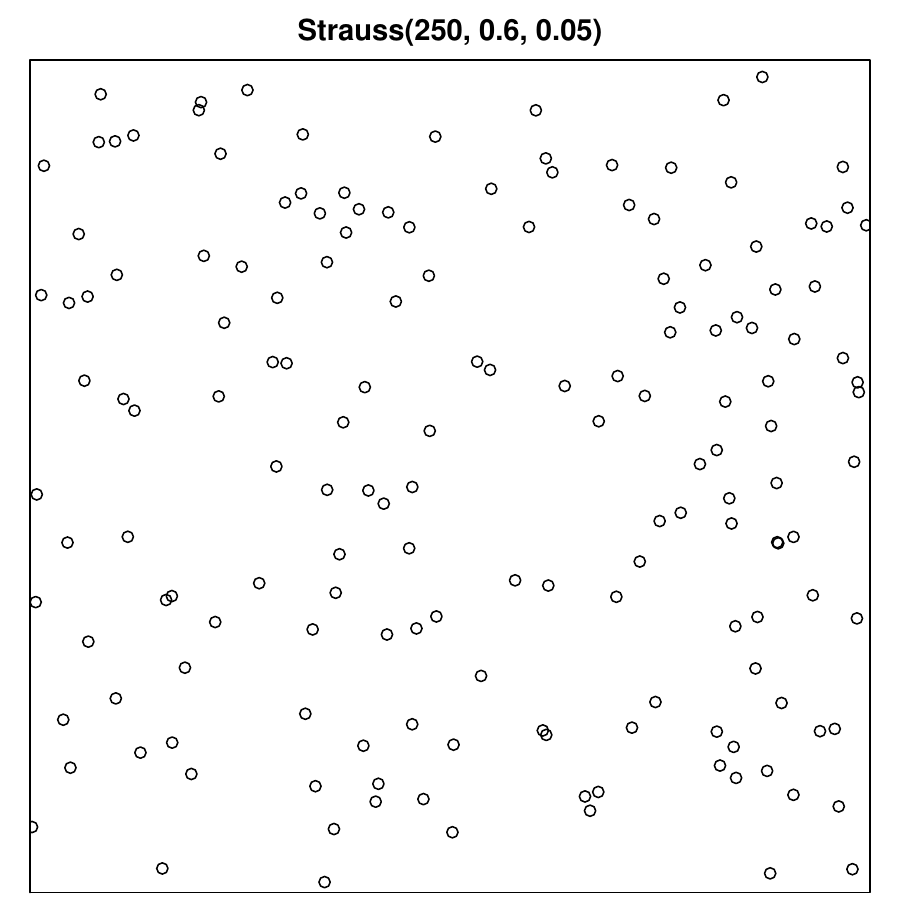}}
\makebox{\includegraphics[width=0.35\textwidth]{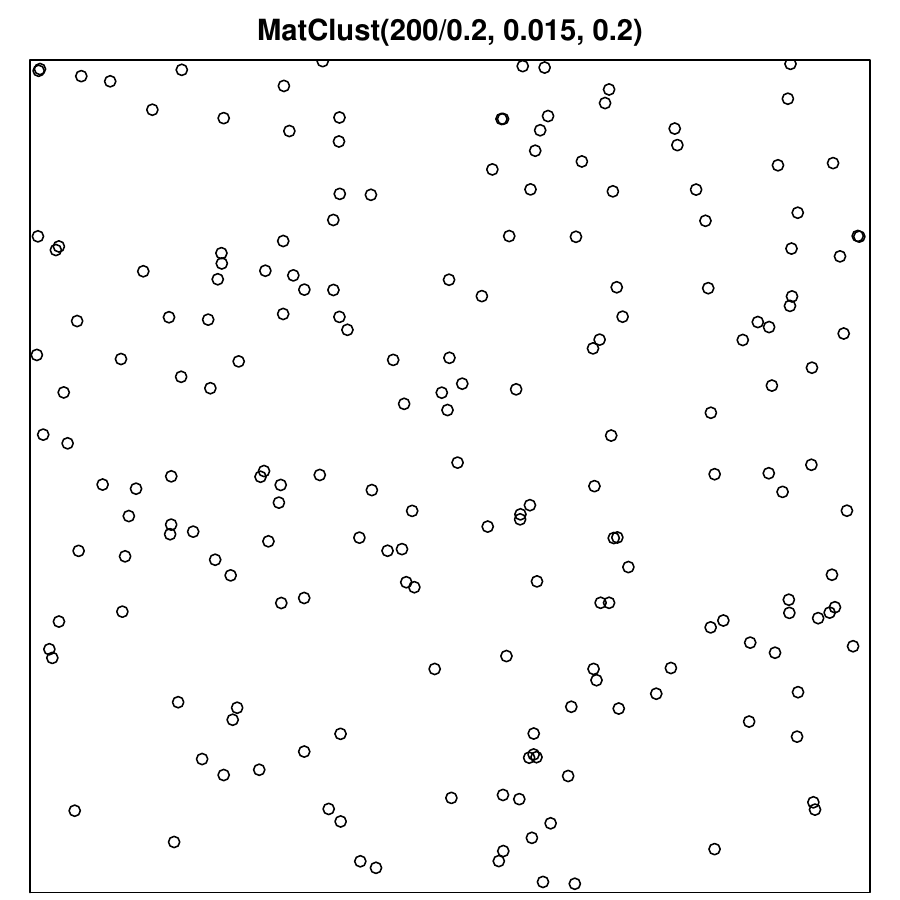} \hfill \includegraphics[width=0.35\textwidth]{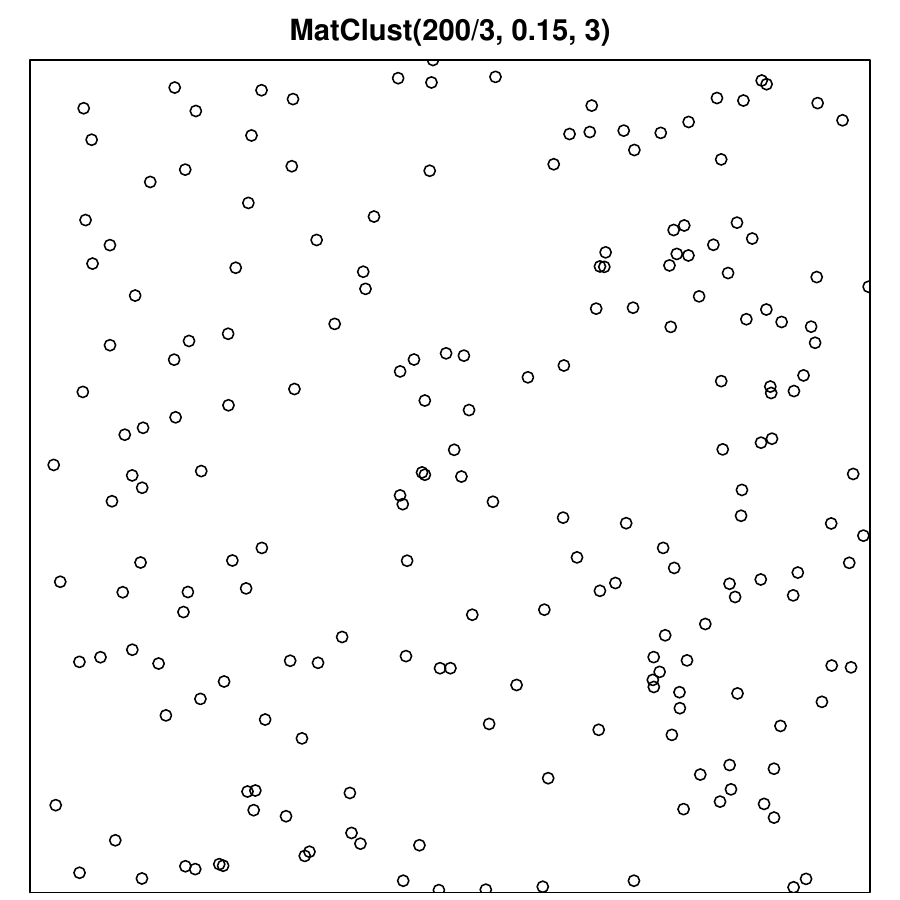}}
\makebox{\includegraphics[width=0.35\textwidth]{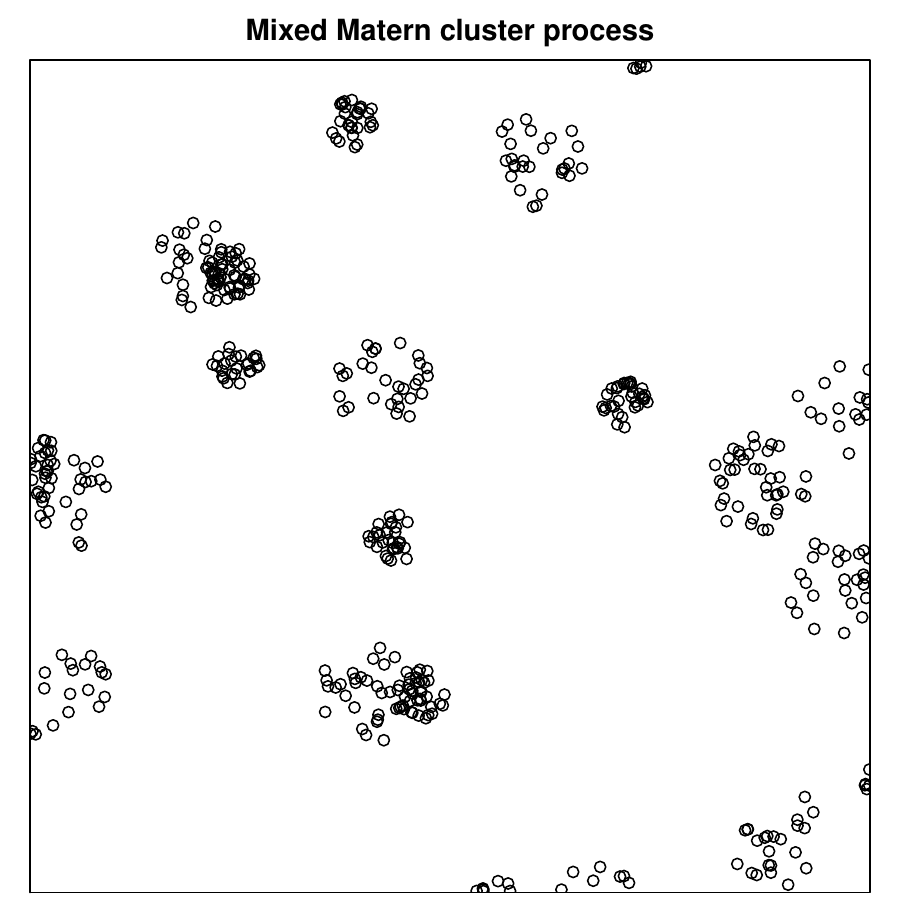} \hfill \includegraphics[width=0.35\textwidth]{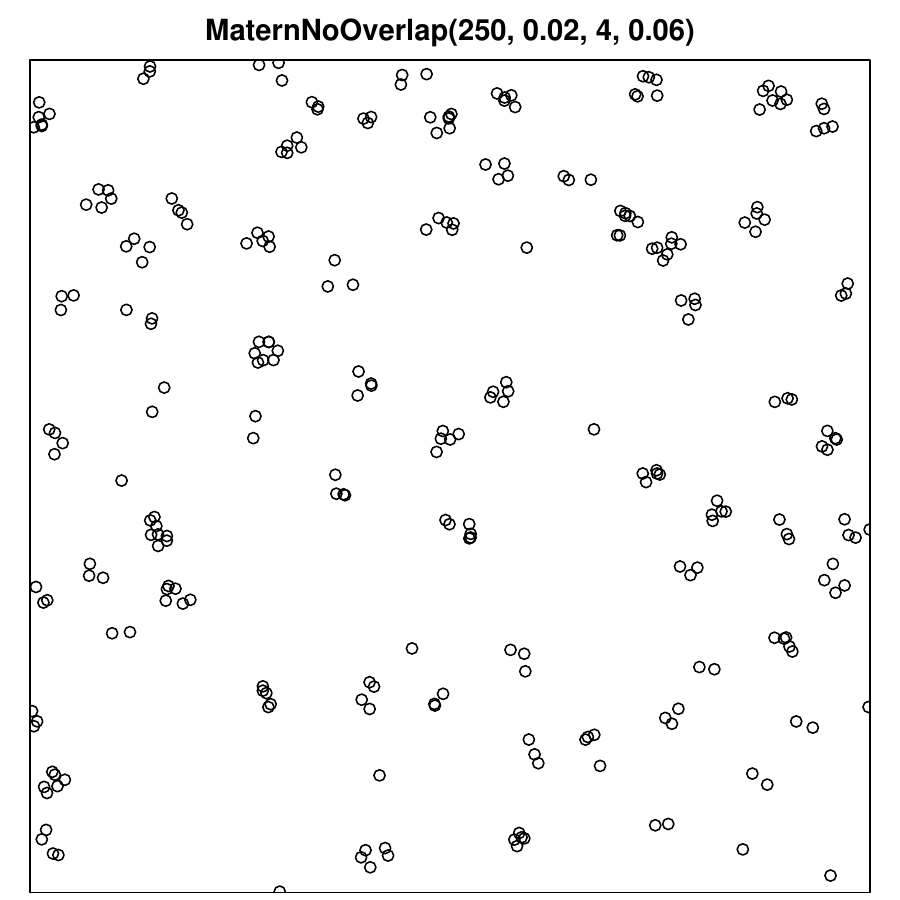}}
\caption{\label{fig:intervalIstudy_expatterns}Example patterns of the repulsive Strauss(250, 0.6, 0.02) and Strauss(250, 0.6, 0.05) processes, and the clustered MatClust(200/0.2, 0.015, 0.2) and MatClust(200/3, 0.15, 3) processes used in the simulation study of Section \ref{sec:simstudy_I} (see description of the processes there) as well as examples of the mixed and non-overlapping Mat{\'e}rn cluster processes used in simulation study of Section \ref{sec:simstudy_composite_hypothesis}. The mixed Mat{\'e}rn cluster process is a superposition of the MatClust(10, 0.06, 30) and MatClust(10, 0.03, 30) processes.}
\end{figure}

\subsection{Simple hypothesis testing: power comparison and the choice of the interval $I$}\label{sec:simstudy_I}

We started our power comparison of the tests in Table \ref{Table:testnames} using the simple CSR hypothesis.
The power of tests depends on the choice of the interval $I=[r_{\min},r_{\max}]$, so
we fixed $r_{\min}=0.005$ and estimated the power with respect to the increasing upper bound $r_{\max}$
in two different scenarios.
In the first scenario, small distances are expected to lead to the rejection of the null hypothesis, and thus the tests should be able to reject the null hypothesis for a small upper bound. In the second scenario, intermediate distances are expected to lead to rejection and the tests should not be capable of rejection for a small upper bound.
We studied both scenarios for repulsive (Strauss(250, 0.6, 0.02) and Strauss(250, 0.6, 0.05) with interaction radii 0.02 and 0.05) and clustered (MatClust(200/0.2, 0.015, 0.2) and MatClust(200/3, 0.15, 3)) patterns (see Section \ref{sec:ppmodels} for the specifications of the models).
The expected number of points for the Mat{\'e}rn processes was 200, while it was approximately 220 and 150 for the two Strauss processes, respectively. The first Mat{\'e}rn process had tiny clusters (disc radius $R_d=0.015$) with expected number of points 0.2, i.e.\ only part of the realized ``clusters'' had $\geq 2$ points, while in the second process, the expected number of points in a cluster with a much larger disc radius $R_d=0.15$ was 3.
Typical realizations of the processes are shown in Figure \ref{fig:intervalIstudy_expatterns}.

For each of the four models, the powers were estimated from 100 realizations.
The results are shown in Figures \ref{fig:choiceL} and \ref{fig:choiceJ} for $T(r)=\hat{L}(r)$ and $T(r)=\hat{J}(r)$, respectively.
The power curves for the studentized deviation tests and for the MBD test are omitted from the figures to simplify the presentation because the performance of these tests was very similar to that of the directional quantile tests and the MHRD test, respectively.

\begin{figure}[htbp]
\centering
\makebox{\includegraphics[scale=0.55]{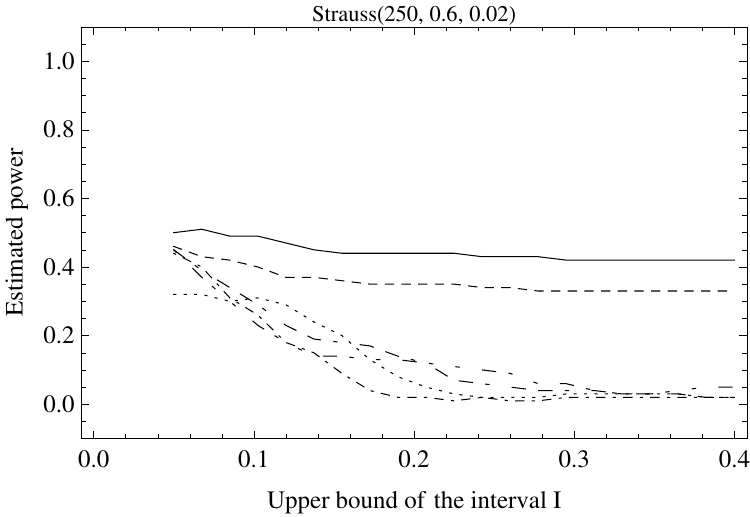} \hfill \includegraphics[scale=0.55]{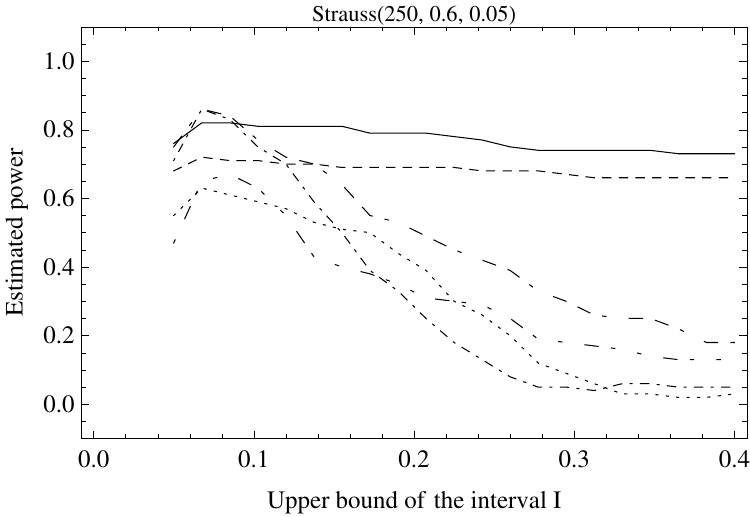}}
\makebox{\includegraphics[scale=0.55]{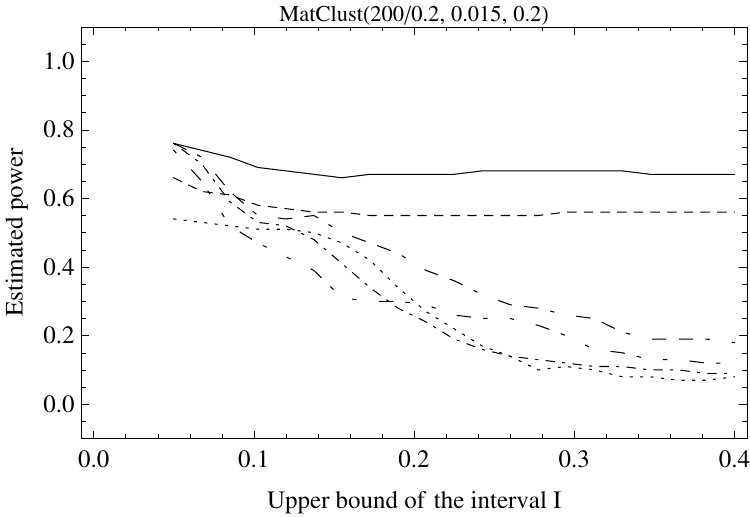} \hfill \includegraphics[scale=0.55]{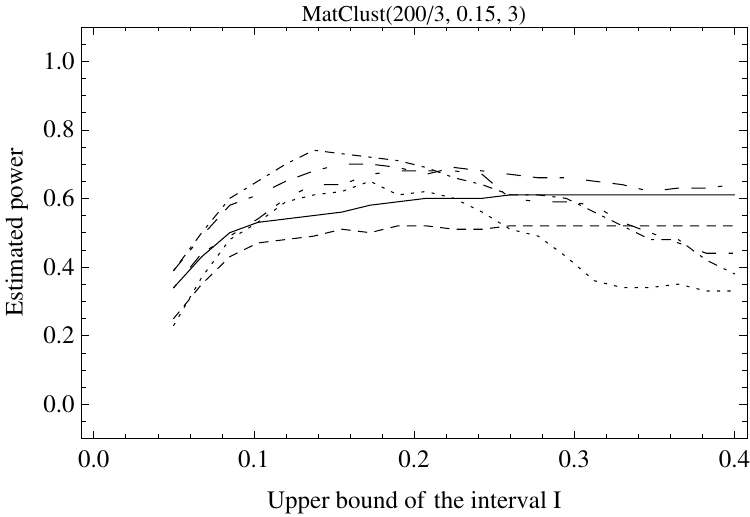}}
\caption{\label{fig:choiceL} The power curves for rank (solid), max\,|\,qdir  (dashed), max (dotted), int (dot dashed), int\,|\,qdir  (dot dash dashed) and MHRD (dot dot dashed) tests with respect to the growing upper limit of the interval $I$  from 0.05 to 0.4. All of the simulated models (indicated above the plots) were tested for CSR with $T(r)=\hat{L}(r)$ and $s=1999$. The full names of the different tests are given in Table \ref{Table:testnames}.}
\end{figure}

\begin{figure}[htbp]
\centering
\makebox{\includegraphics[scale=0.55]{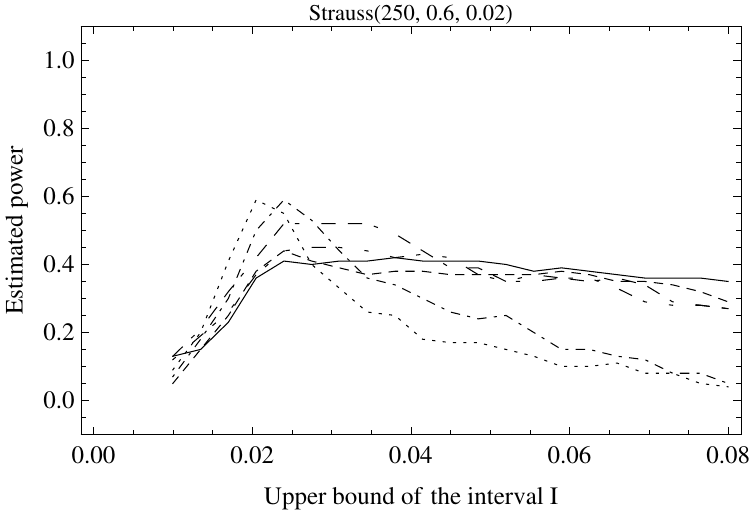} \hfill \includegraphics[scale=0.55]{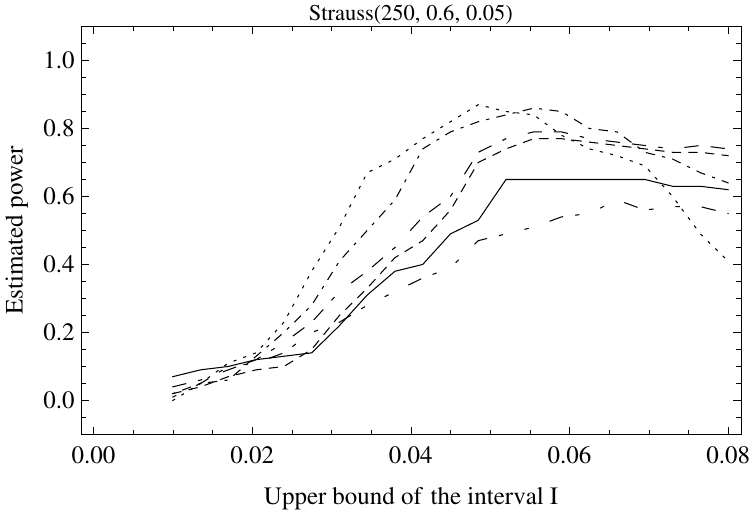}}
\makebox{\includegraphics[scale=0.55]{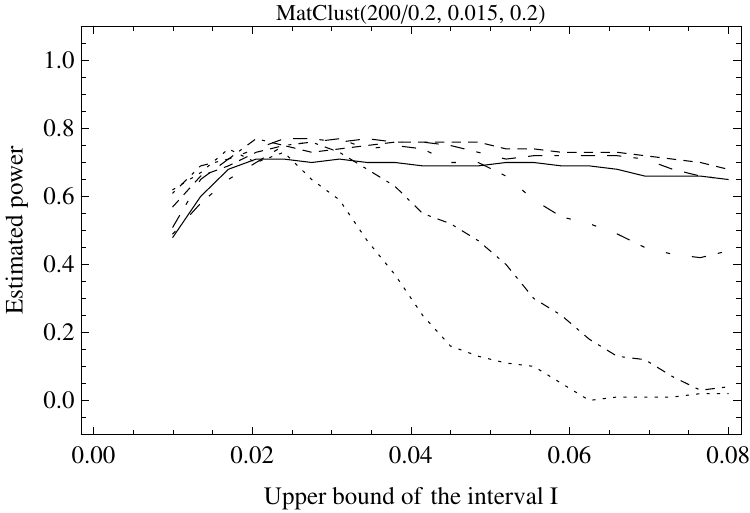} \hfill \includegraphics[scale=0.55]{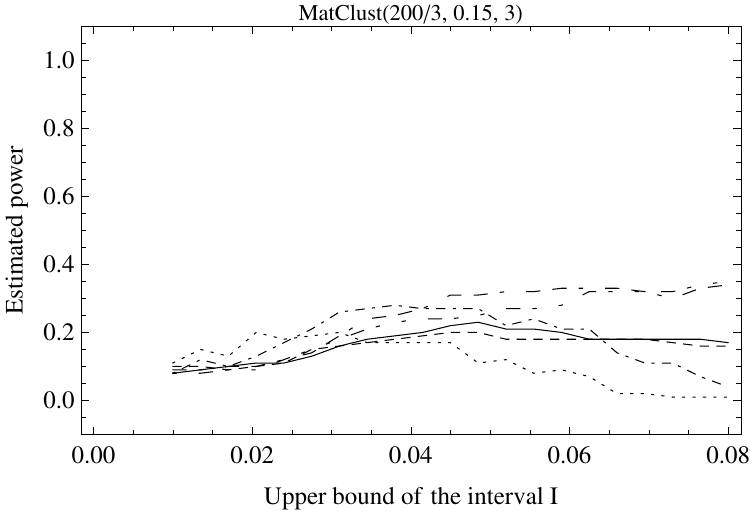}}
\caption{\label{fig:choiceJ} The power curves for rank (solid), max\,|\,qdir  (dashed), max (dotted), int (dot dashed), int\,|\,qdir  (dot dash dashed) and MHRD (dot dot dashed) tests with respect to the growing upper limit of the interval $I$ from 0.01 to 0.08. All of the simulated models (indicated above the plots) were tested for CSR with $T(r)=\hat{J}(r)$ and $s=1999$. The full names of the different tests are given in Table \ref{Table:testnames}.}
\end{figure}

The results show that the rank and scaled (directional quantile and studentized) MAD tests were the most robust tests with respect to the choice of the interval $I$. They also exhibited good power.
In some cases, the scaled integral deviation tests were slightly more powerful than
the rank and scaled MAD tests. This generally happens when a long interval of distances within $I$ is responsible for rejection.
The classical MAD and integral deviation tests, as well as the MHRD and MBD tests, were generally less powerful and less robust than the rank and scaled MAD tests.

Note that for the Strauss processes, the (unscaled) MAD test with $T(r)=\hat{J}(r)$ had the highest power for $r_{\max}\leq R$, where $R$ is the interaction radius of the process (see Figure \ref{fig:choiceJ}). 
This was because the largest deviation of $T_1(r)$ from $T_0(r)$ occurred for the largest distances on $I$, where the test without weights thus gave the greatest importance.

Thus, for the rank and scaled MAD tests, we can conclude that the choice of the interval $I$
plays only a minor role.
Therefore, the interval $I$ can be taken to include the whole domain of interesting distances without fear of losing power.  In contrast, the scaled integral deviation tests did not have such a robust performance.

We also conducted a simulation study to compare the performance of our proposed tests with the tests proposed by \citet{HoChiu2006}, who advocated the use of adapted distance-dependent intensity estimators for the $K$-function.
We obtained comparable results to those obtained by \citet{HoChiu2006} using our simple scaled MAD tests (rank count ordering required 999 simulations instead of 99 to obtain comparable results).

\subsection{Plug-in composite hypothesis testing}\label{sec:simstudy_composite_hypothesis}

In the study reported below, we used the intervals $I=[0.005, 0.2]$ and $I=[0.005, 0.05]$ for the functional test statistics $\hat{L}(r)$ and $\hat{J}(r)$, respectively. 

\subsubsection{Type I error probabilities}

In the case of simple hypotheses, we confirmed empirically that the estimates of the type I error
were close to the nominal level.
Table \ref{Table:results_signlevel_known} shows the empirical type I error probabilities estimated from $N=1000$ realizations of the three simulated processes under the condition that the true (simulation) model was equal to the null model with fixed known parameters.
For $\alpha=0.05$, the proportions of rejections should be in the interval $(0.037, 0.064)$
with a probability of 0.95 (given by the 2.5\% and 97.5\% quantiles of the binomial distribution with parameters
1000 and 0.05), so it can be concluded that all of the tests had correct empirical type I error probabilities.

\begin{table}
\caption{\label{Table:results_signlevel_known}Proportions of rejections (among $N=1000$ repetitions of the test) when the null model was equal to the true (simulation) model (with known parameter values), and the null hypothesis was rejected if $p\leq 0.05$. The full names of the different tests in the columns are given in Table \ref{Table:testnames}.}
\centering
\resizebox{\columnwidth}{!}{
\fbox{%
 \begin{tabular}{ll|c|ccc|ccc|cc}
 True model &  $T(r)$               &   rank  & max  & max\,|\,st & max\,|\,qdir & int & int\,|\,st & int\,|\,qdir & MBD & MHRD \\\toprule
 Binomial(200)           &  $\hat{L}(r)$  & 0.056 & 0.062 & 0.053 & 0.056 & 0.059 & 0.064 & 0.059 & 0.059 & 0.060 \\
                         &  $\hat{J}(r)$  & 0.046 & 0.046 & 0.048 & 0.047 & 0.056 & 0.049 & 0.048 & 0.046 & 0.048 \\ \midrule
 Strauss(350, 0.4, 0.03) &  $\hat{L}(r)$   & 0.053 & 0.056 & 0.057 & 0.056 & 0.054 & 0.048 & 0.054 & 0.057 & 0.062 \\
                         &  $\hat{J}(r)$ & 0.053 & 0.046 & 0.045 & 0.053 & 0.046 & 0.052 & 0.052 & 0.054 & 0.052 \\ \midrule
 MatClust(50, 0.06, 4) & $\hat{L}(r)$     & 0.050 & 0.048 & 0.053 & 0.055 & 0.046 & 0.044 & 0.044 & 0.042 & 0.043 \\
                       &  $\hat{J}(r)$  & 0.051 & 0.058 & 0.051 & 0.049 & 0.053 & 0.058 & 0.051 & 0.053 & 0.053 \\
\end{tabular}}}
\end{table}

In practice, the parameter values of a model are not known and a goodness-of-fit test is often performed according to the following steps.
\begin{enumerate}[(a)]
 \item Estimate the parameters of the null model from the data pattern observed in $W$.
 \item Generate $s$ samples of the null model in $W$ with the estimated parameters.
 \item Estimate $T(r)$ for the data pattern and each simulation.
 \item Calculate the $p$-value based on the chosen ordering.
 \item Reject $H_0$ if $p \leq \alpha$.
\end{enumerate}
Thus, we checked how the parameter estimation step affects the type I error probabilities in our tests for Strauss and Mat{\'e}rn cluster processes.
We estimated the parameters of the models using the R library spatstat (1.31-2) \citep{BaddeleyTurner2005}. Estimates for the parameters of the Mat{\'e}rn cluster process were obtained from the minimum contrast estimation based on the pair correlation function (the non-cumulative counterpart of the $L$-function), whereas for the Strauss process, we used the maximum pseudo-likelihood (MPL) method \citep{BaddeleyTurner2000, DiggleEtal1994} (default method in spatstat) and the logistic likelihood method \citep{BaddeleyEtAl2014a}.

Table \ref{Table:results_signlevel_fitted} shows the proportions of rejections for the fitted Strauss and Mat{\'e}rn cluster models with nuisance parameters.
In general, the tests were conservative. The conservativeness was particularly severe for the Mat{\'e}rn cluster process in the tests based on $\hat{L}(r)$, which was due to the strong interdependency between the fitting method and the test statistic.
There were two exceptions to this conservativeness.
\begin{enumerate}[(a)]
 \item The empirical levels of the tests based on $\hat{J}(r)$ for the Mat{\'e}rn cluster process were close to the nominal level, except for the MBD and MHRD tests.
 \item The tests based on $\hat{J}(r)$ for the Strauss process where the parameters were estimated by
the MPL method with the default (in spatstat) auxiliary parameters
tended to be liberal, particularly for the classical maximum and integral deviation tests.
\end{enumerate}
The first exception appears to have occurred because the estimation method and $\hat{J}(r)$ are only loosely related. The second exception appears to have been due to a poor estimation method (see Table \ref{Table:Strauss_parameter_estimates}) and its interaction with $\hat{J}(r)$.
We conclude that a poor estimation method can mask the conservativeness of a test and lead to less conservative or even liberal testing. We note that the mean (standard deviation) of the parameter estimates for the Mat{\'e}rn cluster process appeared to be adequate: $\hat{\lambda}_b=57.05$ (21.42), $\hat{R}_d=0.056$ (0.011), $\hat{\mu}_d=3.93$ (1.62).

\begin{table}
\caption{\label{Table:results_signlevel_fitted}Proportions of rejections of the null models (among $N=1000$ repetitions of the test at $\alpha=0.05$) using the estimated parameters when the null model was fitted to a realization of the true (simulation) model. In the case of the Strauss process, the interaction radius was fixed to $R=0.03$ and the other parameters were estimated based on the maximum pseudo-likelihood (MPL) method or the logistic likelihood (LL) method. The full names of the different tests in the columns are given in Table \ref{Table:testnames}.}
\centering
\resizebox{\columnwidth}{!}{
\fbox{%
\begin{tabular}{ll|c|ccc|ccc|cc}
Null model &\multirow{2}{*}{$T(r)$}   & \multirow{2}{*}{rank}  & \multirow{2}{*}{max}  & \multirow{2}{*}{max\,|\,st} & \multirow{2}{*}{max\,|\,qdir} & \multirow{2}{*}{int} & \multirow{2}{*}{int\,|\,st} & \multirow{2}{*}{int\,|\,qdir} &
\multirow{2}{*}{MBD} & \multirow{2}{*}{MHRD}
\\ / True model &&&&&&&&&&
\\\toprule
Strauss($R=0.03$)$^{(\text{MPL})}$ &  $\hat{L}(r)$  & 0.026 & 0.044 & 0.040 & 0.039 & 0.027 & 0.025 & 0.025 & 0.027 & 0.033 \\
/ Strauss(350, 0.4, 0.03)  &  $\hat{J}(r)$  & 0.066 & 0.092 & 0.049 & 0.060 & 0.089 & 0.068 & 0.063 & 0.076 & 0.076 \\ \midrule
 Strauss($R=0.03$)$^{(\text{LL})}$ &  $\hat{L}(r)$  & 0.027 & 0.034 & 0.034 & 0.039 & 0.037 & 0.040 & 0.041 & 0.036 & 0.029 \\
/ Strauss(350, 0.4, 0.03) &                          $\hat{J}(r)$  & 0.010 & 0.028 & 0.024 & 0.016 & 0.017 & 0.005 & 0.005 & 0.002 & 0.002 \\  \midrule
 MatClust & $\hat{L}(r)$    & 0.007 & 0.009 & 0.020 & 0.020 & 0 & 0.001 & 0 & 0 & 0 \\ 
/ MatClust(50, 0.06, 4)   &  $\hat{J}(r)$  & 0.048 & 0.056 & 0.062 & 0.052 & 0.053 & 0.056 & 0.044 & 0.023 & 0.015 \\
\end{tabular}}}
\end{table}

\begin{table}
\caption{\label{Table:Strauss_parameter_estimates}Mean (sd) of the parameter estimates for the Strauss($\beta=350$, $\gamma=0.4$, $R=0.03$) process with the maximum pseudo-likelihood (MPL) and logistic likelihood (LL) methods based on $N=1000$ realizations. The interaction radius $R=0.03$ was fixed.}
\centering
\fbox{%
\begin{tabular}{l|cc}
Estimation method & $\beta$ & $\gamma$ \\\toprule
MPL               & 307.05 (30.69)  & 0.49 (0.095) \\
LL                & 346.65 (35.92)  & 0.42 (0.077) \\
\end{tabular}}
\end{table}

\subsubsection{Comparison of the rejection rates}\label{sec:simstudy_power}

In practice, the tests for composite hypotheses are often used with plug-in parameter values (and conservativeness accepted), so we compared the proportions of rejections with the different plug-in tests for the three cases listed in the first two columns of Table \ref{Table:results_composite_hypotheses}.
We recall that the tests are conservative, except for the Mat{\'e}rn cluster process tested with $\hat{J}(r)$.
Nevertheless, a power comparison is meaningful because the proportions
in the lines of Table \ref{Table:results_signlevel_fitted} are statistically indistinguishable,
except for the Strauss case with MPL estimation with $\hat{J}(r)$, as well as a few cases for MBD and MHRD tests.
Similar to the power comparison in Section \ref{sec:simstudy_I},
the rank envelope and scaled MAD tests clearly had higher rejection rates than the classical deviation tests, or the tests based on MBD and MHRD.

\begin{table}
\caption{\label{Table:results_composite_hypotheses}Proportions of rejections of the null models (among $N=1000$ repetitions of the test at $\alpha=0.05$) when the simulated pattern was used first to fit the null model and then for testing. The logistic likelihood (LL) method was used for the Strauss process. The mixed Mat{\'e}rn cluster process is a superposition of the MatClust(10, 0.06, 30) and MatClust(10, 0.03, 30) processes, and the non-overlapping process is MaternNoOverlap(250, 0.02, 4, 0.06). The full names of the different tests in the columns are given in Table \ref{Table:testnames}.}
\centering
\resizebox{\columnwidth}{!}{
\fbox{%
\begin{tabular}{ll|c|ccc|ccc|cc}
Null model  &\multirow{2}{*}{$T(r)$}    & \multirow{2}{*}{rank}  & \multirow{2}{*}{max}  & \multirow{2}{*}{max\,|\,st} & \multirow{2}{*}{max\,|\,qdir} & \multirow{2}{*}{int} & \multirow{2}{*}{int\,|\,st} & \multirow{2}{*}{int\,|\,qdir} & \multirow{2}{*}{MBD} & \multirow{2}{*}{MHRD}
\\
/ True model &&&&&&&&&&
        \\\toprule
Strauss(R=0.02)$^{(\text{LL})}$     & $\hat{L}(r)$ & 0.789 & 0.362 & 0.788 & 0.783 & 0.069 & 0.161 & 0.175 & 0.098 & 0.111 \\
/ Strauss(350, 0.4, 0.03)                       & $\hat{J}(r)$  & 0.732 & 0.378 & 0.731 & 0.780 & 0.593 & 0.696 & 0.662 & 0.290 & 0.073 \\ \midrule
MatClust         & $\hat{L}(r)$   & 0.005 & 0.011 & 0.004 & 0.006 & 0.001 & 0 & 0 & 0 & 0 \\
/ MixMatClust   & $\hat{J}(r)$  &0.942 & 0.557 & 0.952 & 0.956 & 0.842 & 0.932 & 0.941 & 0.292 & 0.288 \\ \midrule
MatClust    & $\hat{L}(r)$  & 0.564 & 0.063 & 0.257 & 0.481 & 0.100 & 0.138 & 0.234 & 0.107 & 0.049 \\
/ NoOMatClust  & $\hat{J}(r)$  & 0.161 & 0.008 & 0.044 & 0.106 & 0.053 & 0.107 & 0.182 & 0.089 & 0.017 \\
\end{tabular}}}
\end{table}

\subsection{Simulation study with inhomogeneous models}\label{sec:simstudy_inhomog}

We also studied the performance of different tests to detect interaction between points in the presence of inhomogeneity. For this purpose, we generated ``data'' patterns by means of the Strauss($\beta$, 0.6, 0.03) process with homogeneous and inhomogeneous first-order terms:
\begin{enumerate}[(a)]
 \item the stationary process with $\beta=250$, and
 \item $\beta(x,y)=\exp(\beta_0+\beta_1 x+ \beta_2 y)$ with $(\beta_0,\beta_1,\beta_2)=(5,0.5,0.5)$, leading to increasing intensity with respect to the coordinates.
\end{enumerate}
Both processes had approximately the same expected number of points, i.e., 200 in the unit square window.
The null model in the case of (a) was CSR, while in the case of (b), the null hypothesis was that data are compatible with the inhomogeneous Poisson process where the intensity is of the same form as the $\beta$-term, i.e., the form of inhomogeneity was assumed to be known, but the parameters unknown.
In the case of (a), we used the $L$- and $J$-functions described in Appendix \ref{sec:summary_functions},
whereas in the case of (b), we employed the inhomogeneous versions of these functions \citep{BaddeleyEtal2000, Lieshout2011} using the estimated parametric intensity. 
In the latter case (b), we explored two options, namely the cases where
\begin{enumerate}
\renewcommand{\theenumi}{(\roman{enumi})}
\item the parametric intensity was estimated for the ``data'' pattern and the test statistic was the estimator of the $L_{\inhom}$- or $J_{\inhom}$-function with this estimated intensity, 
\item the parametric intensity was estimated for the ``data'' pattern and reestimated for each simulated pattern $s=2,\dots,s+1$ in the test, i.e., the test statistic was the estimator of $L_{\inhom}$ or $J_{\inhom}$ with intensity to be estimated parametrically from the point pattern.
\end{enumerate}
The case (i) seem the preferred choice because, in this case, the parameters of the intensity need to be estimated only once. In our study, we estimated the parameters $\beta_0, \beta_1, \beta_2$ of the exponential intensity functions by the standard maximum likelihood estimation using the R library spatstat \citep{BaddeleyTurner2005}. This is a practical estimation method that is commonly applied to estimate covariate effects when using real data. 
Furthermore, we took $I=[0.005, 0.2]$ for the $L$-functions and $I=[0.005, 0.05]$ for $J$-functions, as in the homogeneous study described in Section \ref{sec:simstudy_composite_hypothesis}, and, for estimation of empirical type I error probabilities and rejection rates, we generated $N=500$ realizations from each model considered and repeated tests for each generated pattern.

First, to see how the empirical type I error probabilities were affected by estimation of the parameters of the intensity, we investigated the empirical type I error probabilities for the inhomogeneous Poisson process with increasing intensity in the case of the functional test statistics described above. 
Table \ref{Table:results_inhom_typeIerror} shows the results, which we summarize for the cases (i) and (ii) as follows:
\begin{enumerate}
\renewcommand{\theenumi}{(\roman{enumi})}
\item The empirical type I error probabilities were quite close to the nominal level for $T(r) = \hat{J}_{\inhom}(r)$. For $T(r)=\hat{L}_{\inhom}(r)$, all of the other tests were extremely conservative, except for the rank and scaled MAD tests which were only somewhat conservative.
\item The empirical type I error probabilities were quite close to the nominal level for all tests and both test functions.
\end{enumerate}
Note that the proportions of rejections for $N=500$ should lie between 0.032 and 0.070 with probability 0.95.
Table \ref{Table:results_inhom_typeIerror} presents the empirical type I error probabilities also for the homogeneous Poisson process (CSR) case, where the simulations were generated from the Poisson process with estimated intensity (instead of fixing the number of points). These probabilities were close to the nominal level, indicating that 
the type I error probabilities were not noticeably affected by the varying number of points. See, however, \citet{BaddeleyEtal2014} where opposite results were obtained for testing a Poisson point process with a low intensity.

We also tested the null hypothesis that there is no interaction against the Strauss processes with the inhomogeneity parts specified by the $\beta$-terms.
We simulated the inhomogeneous Strauss model based on a Metropolis-Hastings algorithm, using the R library spatstat. Table \ref{Table:results_inhom} shows the rejection rates when the ``data'' patterns were generated by the Strauss processes.
It can be seen that the rank and scaled MAD tests were the most powerful of all the tests studied for $\hat{L}_{\inhom}$. For $\hat{J}_{\inhom}$, the scaled integral deviation tests had slightly higher rejection rates.
For $\hat{J}_{\inhom}$ in the case of (i) and for both $\hat{L}_{\inhom}$ and $\hat{J}_{\inhom}$ in the case of (ii), the rejection rates for the inhomogeneous Strauss processes were almost the same as those for the homogeneous Strauss process. For $\hat{L}_{\inhom}$ in the case of (i), all of the other tests, except the rank and scaled MAD tests, lost their ability to differentiate the Strauss inhibition processes from the ``no interaction'' case (note the conservativeness of the tests).

Thus, the results were rather dependent on the intensity used in the inhomogeneous functional test statistics, but the rank and scaled MAD tests were less affected by the choice of the intensity than the other tests.
We conclude that the better strategy appears to be to reestimate the intensity for each simulated pattern in the test (case (ii)), at least when $\hat{L}_{\inhom}$ is used as the functional test statistic, because then tests are likely to be only slightly conservative and this leads to more powerful tests.

\begin{table}
\caption{\label{Table:results_inhom_typeIerror}Proportions of rejections (among $N=500$ repetitions of the test at $\alpha=0.05$) of the null models of the homogeneous Poisson process (HPP) and inhomogeneous Poisson processes (IPP) using the estimated parameters. For the HPP, only the constant intensity $\lambda=250$  was estimated. For the IPP, the null model was fitted to a realization of the true (simulation) model (= a ``data'' pattern) with ``linear'' intensity $\exp(\beta_0+\beta_1 x+\beta_2 y)$ with $(\beta_0,\beta_1,\beta_2)=(5,0.5,0.5)$. The parametric form of the intensity was fixed and only the parameters $\beta_0,\beta_1,\beta_2$ were estimated. 
In the case (i) the intensity used in the estimator of $L_{\inhom}$ or $J_{\inhom}$ was fixed to that estimated from the ``data'' pattern, whereas in the case (ii), the intensity was reestimated for each simulation of the inhomogeneous Poisson process. The full names of the different tests in the columns are given in Table \ref{Table:testnames}.}
\centering
\resizebox{\columnwidth}{!}{
\fbox{%
\begin{tabular}{ll|c|ccc|ccc|cc}
True model & $T(r)$   & rank  & max  & max\,|\,st & max\,|\,qdir & int & int\,|\,st & int\,|\,qdir & MBD & MHRD
        \\\toprule
HPP(250) & $\hat{L}(r)$   & 0.047 & 0.052 & 0.053 & 0.053 & 0.056 & 0.057 & 0.048 & 0.053 & 0.056 \\ 
       & $\hat{J}(r)$   & 0.050 & 0.052 & 0.046 & 0.050 & 0.044 & 0.052 & 0.050 & 0.049 & 0.040 \\ \midrule
IPP(linear) & $\hat{L}_{\inhom}(r)^{(i)}$ & 0.026 & 0 & 0.018 & 0.026 & 0 & 0 & 0 & 0 & 0 \\ 
                 & $\hat{J}_{\inhom}(r)^{(i)}$ & 0.056 & 0.054 & 0.072 & 0.068 & 0.048 & 0.060 & 0.058 & 0.054 & 0.056 \\ \midrule
IPP(linear) & $\hat{L}_{\inhom}(r)^{(ii)}$ & 0.038 & 0.044 & 0.044 & 0.042 & 0.040 & 0.036 & 0.036 & 0.046 & 0.052 \\ 
                 & $\hat{J}_{\inhom}(r)^{(ii)}$ & 0.032 & 0.046 & 0.028 & 0.034 & 0.050 & 0.030 & 0.030 & 0.040 & 0.040 \\ 
\end{tabular}}}
\end{table}

\begin{table}
\caption{\label{Table:results_inhom}Proportions of rejections (among $N=500$ repetitions of the test at $\alpha=0.05$) of the null models when the ``data'' patterns were generated by the true (simulation) models with the following first-order terms: (a) constant 250, (b) ``linear'' $\exp(\beta_0+\beta_1 x+\beta_2 y)$ with $(\beta_0,\beta_1,\beta_2)=(5,0.5,0.5)$. The fitted inhomogeneous Poisson process (IPP) had the same parametric form as its intensity and only the parameters $\beta_0,\beta_1,\beta_2$ were estimated. In the case (i) the intensity used in the estimator of $L_{\inhom}$ or $J_{\inhom}$ was fixed to that estimated from the ``data'' pattern, whereas in the case (ii), the intensity was reestimated for each simulation of the inhomogeneous Poisson process. The full names of the different tests in the columns are given in Table \ref{Table:testnames}.}
\centering
\resizebox{\columnwidth}{!}{
\fbox{%
\begin{tabular}{ll|c|ccc|ccc|cc}
Null model  &\multirow{2}{*}{$T(r)$}    & \multirow{2}{*}{rank}  & \multirow{2}{*}{max}  & \multirow{2}{*}{max\,|\,st} & \multirow{2}{*}{max\,|\,qdir} & \multirow{2}{*}{int} & \multirow{2}{*}{int\,|\,st} & \multirow{2}{*}{int\,|\,qdir} & \multirow{2}{*}{MBD} & \multirow{2}{*}{MHRD}
\\
/ True model &&&&&&&&&&
        \\\toprule
CSR  & $\hat{L}(r)$ & 0.622 & 0.254 & 0.587 & 0.582 & 0.081 & 0.209 & 0.233 & 0.127 & 0.161 \\
/ Strauss(250, 0.6, 0.03)           & $\hat{J}(r)$ & 0.586 & 0.381 & 0.587 & 0.617 & 0.549 & 0.671 & 0.666 & 0.598 & 0.534 \\
 \midrule
IPP(linear) & $\hat{L}_{\inhom}(r)^{(i)}$ & 0.340 & 0 & 0.288 & 0.300 & 0 & 0 & 0 & 0 & 0 \\ 
/ Strauss(linear, 0.6, 0.03)        & $\hat{J}_{\inhom}(r)^{(i)}$ & 0.628 & 0.390 & 0.560 & 0.636 & 0.584 & 0.696 & 0.700 & 0.638 & 0.572 \\  \midrule
IPP(linear) & $\hat{L}_{\inhom}(r)^{(ii)}$ & 0.552 & 0.238 & 0.524 & 0.552 & 0.082 & 0.166 & 0.196 & 0.134 & 0.164 \\ 
/ Strauss(linear, 0.6, 0.03)        & $\hat{J}_{\inhom}(r)^{(ii)}$ & 0.522 & 0.382 & 0.496 & 0.522 & 0.564 & 0.668 & 0.680 & 0.622 & 0.558 \\ 
\end{tabular}}}
\end{table}

\section{Data examples}\label{sec:datastudy}

In Section \ref{sec:dataex_particles}, the new global envelope tests are applied to the particle data of Section \ref{sec:motivating_ex_and_current_tech}.
Further, Section \ref{sec:ex_tree_data} illustrates testing a composite hypothesis, and Section \ref{sec:ex_fallen_trees} deals with testing an inhomogeneous point process model and mark independence of marks in a marked point pattern.

\subsection{Intramembranous particles: comparing different global envelopes}\label{sec:dataex_particles}

We continue with the example based on intramembranous particles (see Figure \ref{fig:particles}).
We recall that the diagnostic pointwise envelope suggests the rejection of the Gibbs hard core model due to several distances,
as shown in Figure \ref{fig:particles-env99}.
Figure \ref{fig:particles-envelopetests} shows the outputs of the MAD and rank envelope tests.
The unscaled MAD test could not reject the Gibbs hard core model, but it was rejected by all of the new tests. The rejection was explained by the repulsive behaviour of particles at very short distances and subtle clustering 
with slightly larger distances ($\approx 15$ pixels). The attractive behaviour for the larger distances detected by the diagnostic plot (Figure \ref{fig:particles-env99}) was not proved.
The repulsive behaviour and also the subtle clustering may be explained by the variable size of particles.
There may also be physical reasons for attraction \citep{BaumannEtal1990}, which is also a subject of further research \citep{SchladitzEtal2003}.

For the unscaled and scaled MAD envelope tests, we used only $s=99$ simulations.
However, we also performed the scaled tests with $s=2499$. The test results were the same,
but the envelopes were smoother.
In this case, the scaled MAD envelopes appear as a reasonable ``approximation''
for the rank envelope computed from $s=2499$ simulations.

\begin{figure}[htbp]
\centering
\makebox{
\includegraphics[width=1\textwidth]{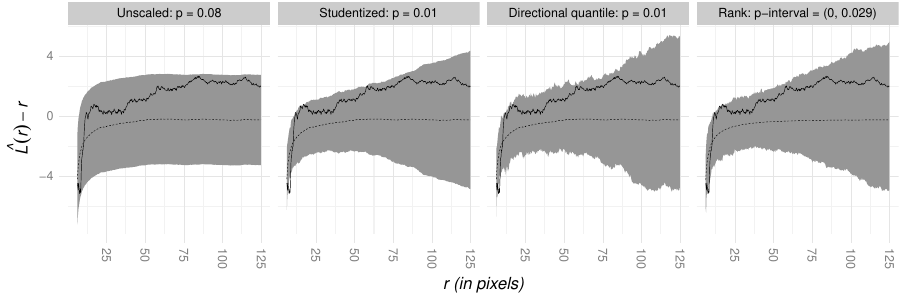}
}
\caption{\label{fig:particles-envelopetests} Three global envelope tests with $T(r)=\hat{L}(r)$ used to test a Gibbs hard core null model for the pattern in Figure \ref{fig:particles}. The unscaled, studentized, and directional quantile MAD envelope tests were based on $s=99$ simulations, and the rank envelope test was based on $s=2499$ simulations. The grey areas show the 95\% global envelopes on $I=[6.1, 125]$ (pixels), the solid black line is the data function and the dashed line represents the (estimated) expectation $T_0(r)$.}
\end{figure}

\subsection{Data example of a tree pattern: testing a composite hypothesis}\label{sec:ex_tree_data} 

The patterns of saplings and large trees in Figure \ref{fig:treedata} originated
from an uneven aged multi-species broadleaf nonmanaged forest in Kaluzhskie Zaseki,
Russia.
These data (all trees) were used by \citet{GrabarnikChiu2002},
who tested the compatibility of the CSR with the data under specific alternative hypotheses.
Subsequently, \citet{Lieshout2010} (in the Handbook of Spatial Statistics) split the data into small and large trees,
and used them to illustrate basic methods for spatial point process statistics.
Similar to \citet{Lieshout2010}, we aimed to quantify the spatial arrangement of the small and large trees.

In the first step, we tested the CSR for the tree patterns in Figure \ref{fig:treedata} using the rank envelope test
with $T(r)=\hat{L}(r)$ (figure omitted). The CSR was clearly rejected for the small trees, but we were not able to reject it for the large trees at the significance level $0.05$. One possible explanation is that the pattern of large trees simply contained too few trees to detect the regularity. Another reason why the CSR hypothesis was not rejected is of biological ground. This is connected with the plasticity of the crowns of trees in broadleaf forests \citep{LonguetaudEtal2013, SchroterEtal2012}. As a consequence, the interactions between trees in broadleaf forests are generally not as strong as those in boreal forests.

\begin{figure}[h!]
\centering\makebox{\includegraphics[width=0.5\textwidth]{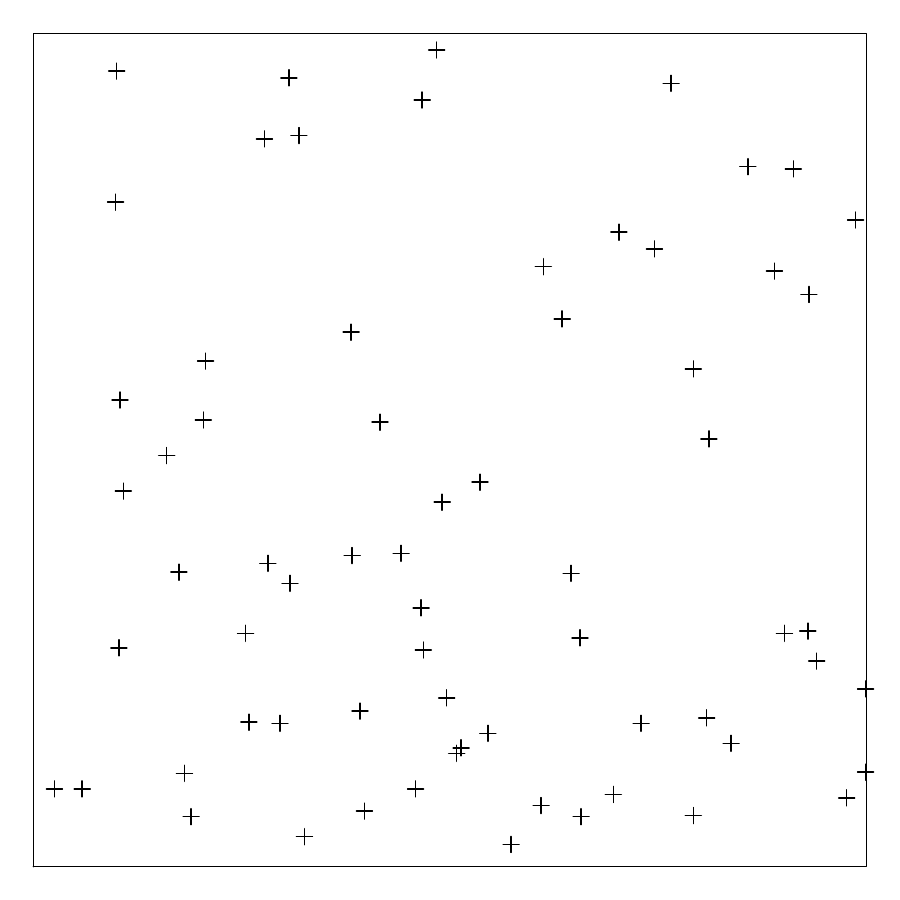}\includegraphics[width=0.5\textwidth]{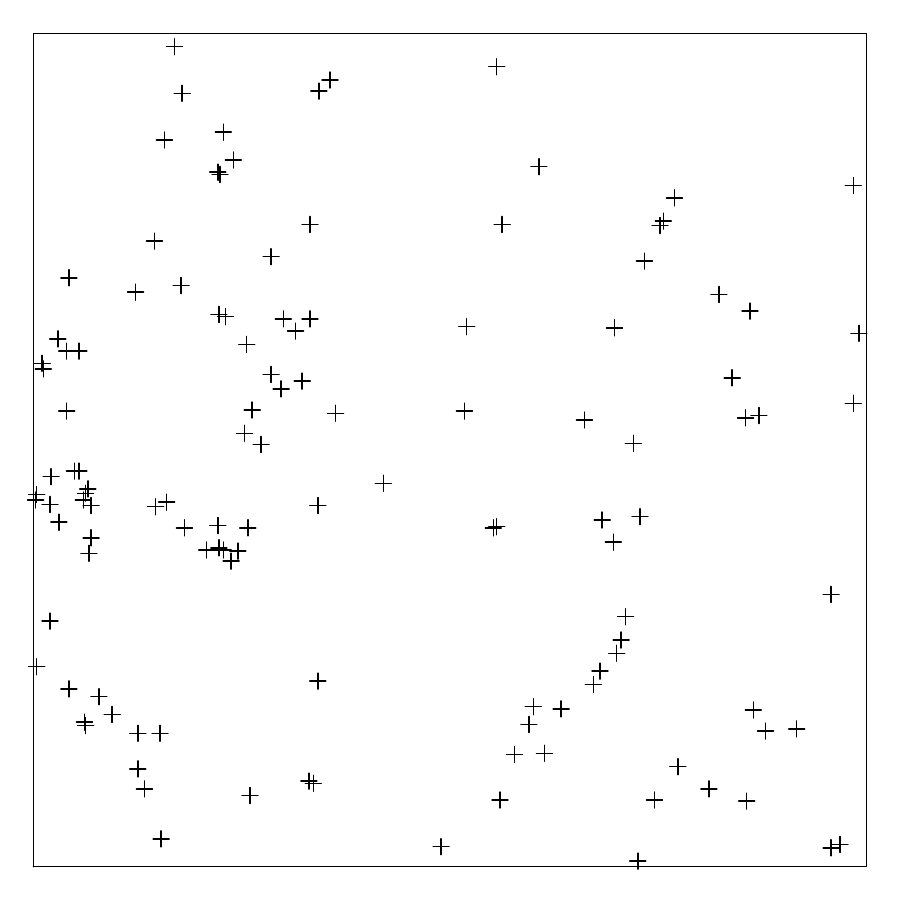}}
\caption{\label{fig:treedata}Locations of trees (0.1 m accuracy in coordinates) in an area of 75 m $\times$ 75 m. Left: 67 trees measuring $>25$ m in height; right: 123 trees measuring $\leq 15$ m in height.}
\end{figure}

The pattern of the small trees was clearly clustered, so we fitted a Mat{\'e}rn cluster process to this pattern.
The estimated parameter values were $\hat{\lambda}_b=0.010$, $\hat{R}_d=3.08$ and $\hat{\mu}_d=2.09$.
Next, we generated $s=2499$ simulations from the fitted model,
and we calculated the conservative (plug-in) rank envelope and the adjusted rank envelope (see Section \ref{sec:composite}).
Whereas the unadjusted critical rank was $k_\alpha = 5$ for $\alpha=0.05$, we obtained the adjusted critical rank $k_\alpha^* = 30$.
Both envelopes are shown in Figure \ref{fig:treedata-saplings-matclust-DG} (left).
The data function was completely inside the adjusted rank envelope, so
the Mat{\'e}rn cluster process was considered to be a suitable model for describing the spatial arrangement of saplings
using this test (the scaled MAD envelope tests based on $s=2499$ gave a very similar result).

We approximated the $\alpha$-quantile from $2499$ additional $p_N$-values of rank counts
based on a lower number of simulations $s_2=99$ and obtained $\alpha^* = 0.25$ and $k_\alpha^* = 35$, which gave a reasonable approximation of the adjusted rank envelope. 

To further explore the goodness-of-fit of the Mat{\'e}rn cluster process we also tested the model with $\hat{J}(r)$, as shown in Figure \ref{fig:treedata-saplings-matclust-DG} (right). In this case, it was interesting that the plug-in and adjusted rank envelopes coincided, which agreed with the simulation study results in Section \ref{sec:simstudy_composite_hypothesis}. Furthermore, this test indicated that the Mat{\'e}rn cluster process was not adequate for describing the pattern of saplings because at distances of $\approx 4$ m, the data pattern appeared to be less clustered than the fitted model.
The Mat{\'e}rn cluster model might be interpreted as a regeneration process in circular gaps between large trees. However, it is possible that the gap openings in the forest were not exactly circular, thereby leading to the rejection of the model by the $J$-function.

\begin{figure}[htbp]
\centering
\makebox{\includegraphics[width=0.5\textwidth]{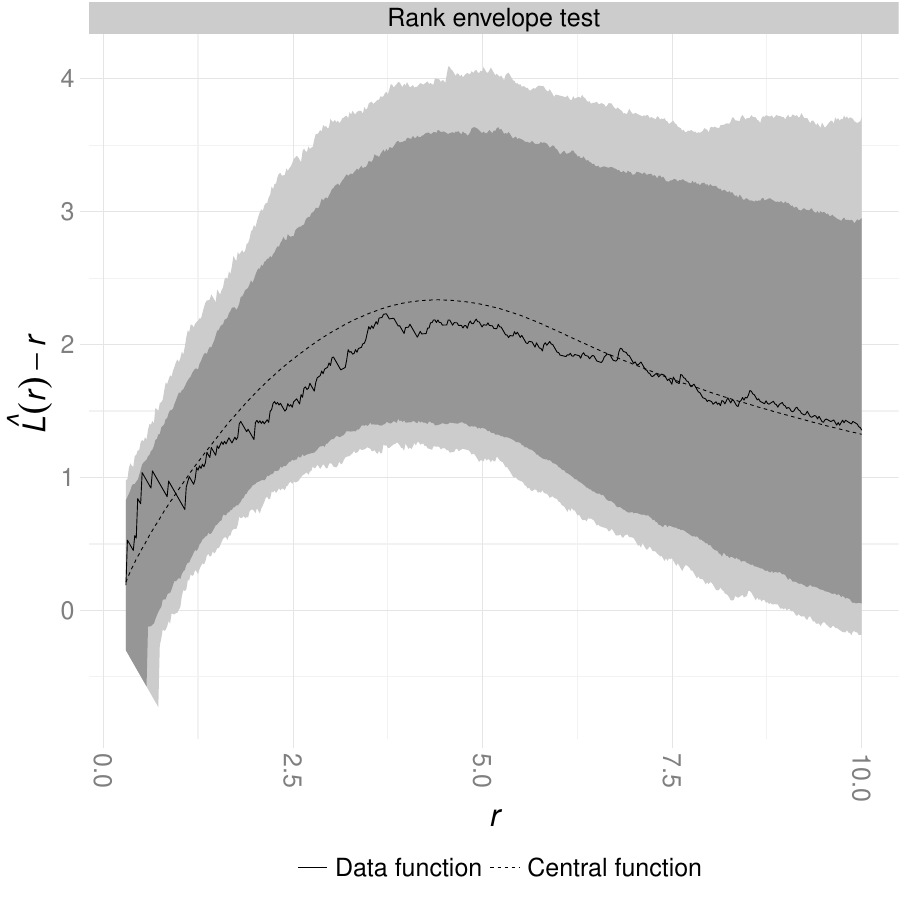}\includegraphics[width=0.5\textwidth]{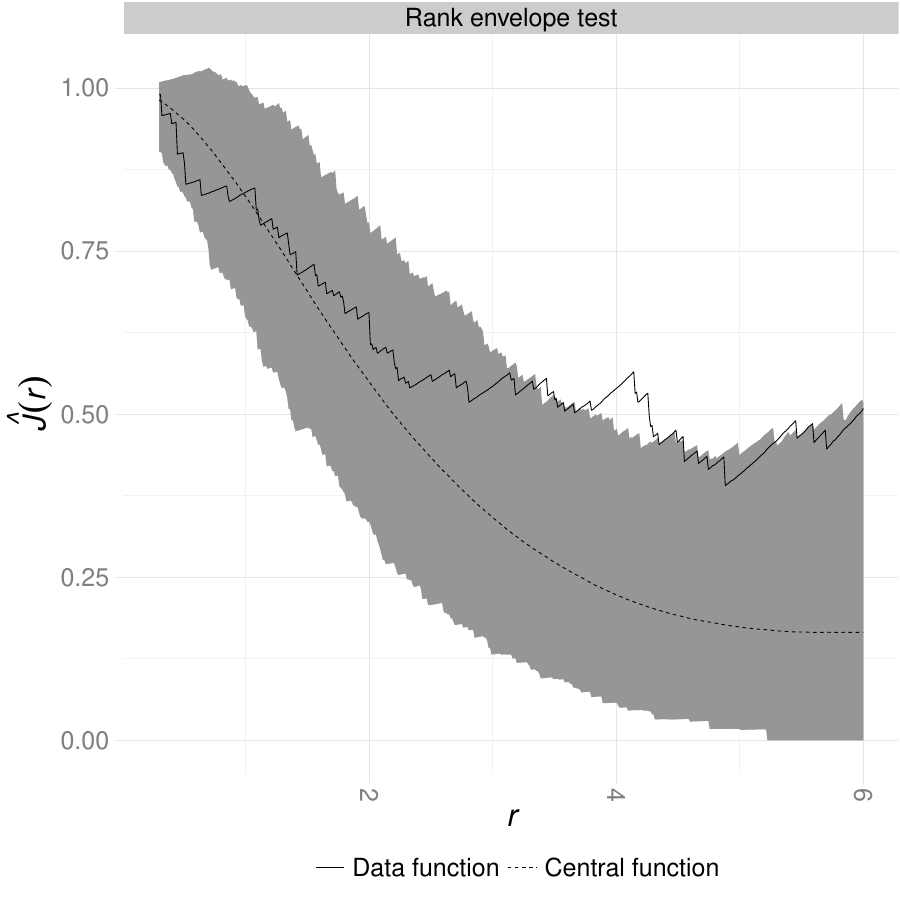}}
\caption{\label{fig:treedata-saplings-matclust-DG} Testing the fitted Mat{\'e}rn cluster process with $T(r)=\hat{L}(r)$ (left) and $T(r)=\hat{J}(r)$ (right) using $s=2499$ for the small trees shown in Figure \ref{fig:treedata} (right). The light and dark grey areas show the convervative (plug-in) and 95\% adjusted rank envelopes on $I=[0.3,10]$, respectively, which both coincide for the $J$-function. The solid black line is the data function and the dashed line represents the (estimated) expectation $T_0(r)$.}
\end{figure}

\subsection{Data example of fallen trees: testing an inhomogeneous model and mark independence}\label{sec:ex_fallen_trees}

In this data example, we show that the testing procedures are not limited to homogeneous point processes, rectangular windows or goodness-of-fit tests for point patterns. In this case, we studied inhomogeneous Poisson process and mark independence hypotheses for the marked point pattern shown in Figure~\ref{fig:trees}.
These hypotheses were studied previously by \cite{MrkvickaEtal2012}, but only the conventional envelope procedures were used.

\begin{figure}[h!]
\centering
\centerline{ \includegraphics[width=0.75\textwidth]{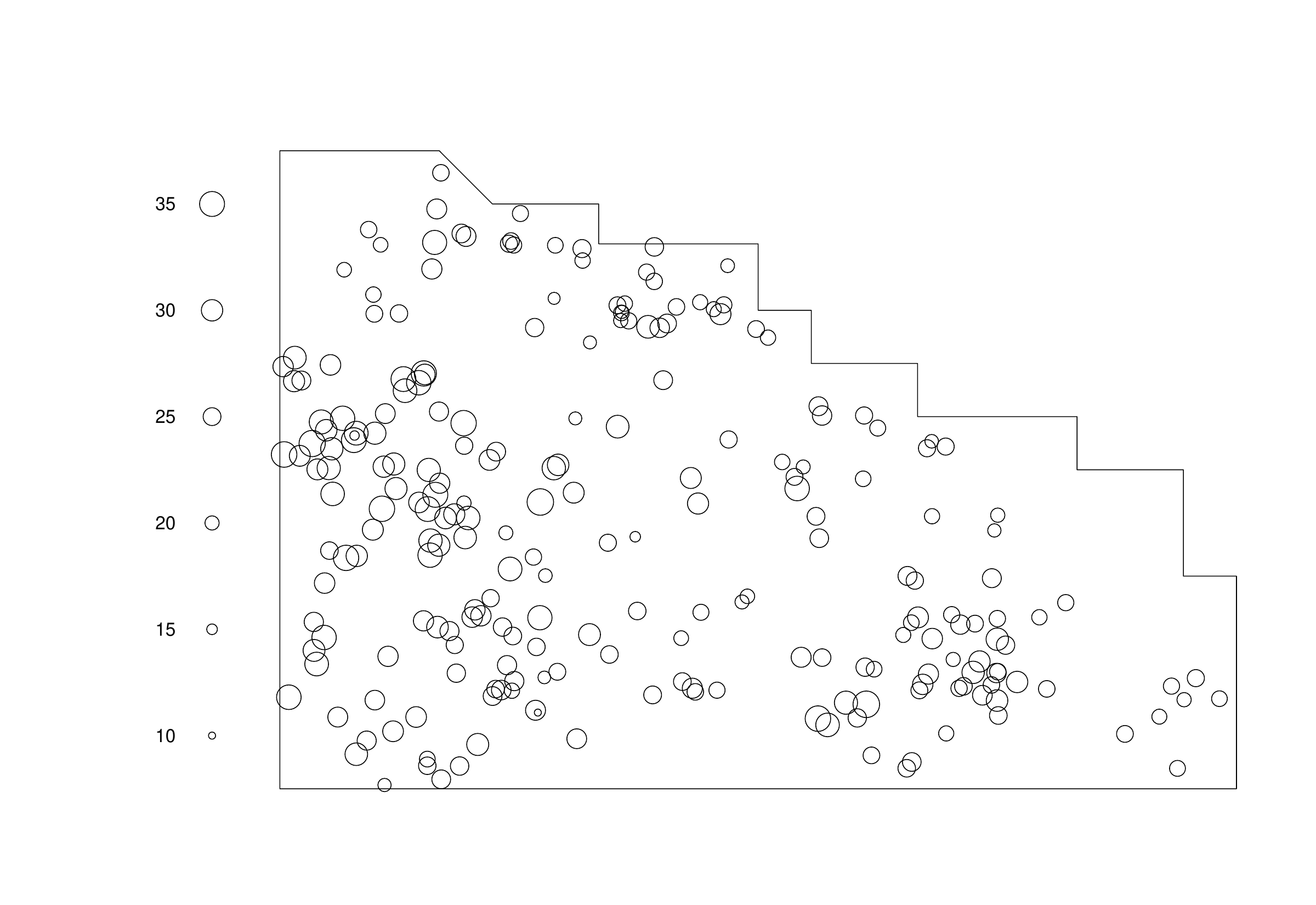}}
\caption{\label{fig:trees}Locations of 231 fallen trees. The longest side lengths of the area are 720 m and 480 m. The sizes of the circles are proportional to the heights of the trees (in meters), as shown in the legend.}
\end{figure}

The dataset comprised the locations and heights of
232 trees, which fell during two large wind gusts (1967 and 1990) in the west of France \citep{PontaillerEtal1997}.
The study area was a biological reserve, which had been preserved for at least four centuries,
with little human influence for a long period \citep{Guinier1950}.
Thus, the forest stand followed almost natural dynamics. It was an uneven-aged beech
stand with a few old oaks.

First, we investigated whether the locations of the fallen trees could be modeled by an inhomogeneous Poisson process with a given intensity function \citep[see e.g.][]{IllianEtal2008}.
The inhomogeneous intensity was estimated by a kernel estimator with a given bandwidth of 53 m,
as described by \citet{MrkvickaEtal2012}, and we used a standard estimator of the inhomogeneous $L$-function \citep{BaddeleyEtal2000} as the functional test statistic.
Our results (Section \ref{sec:simstudy_inhomog}) showed that the estimation of the intensity may affect the type I error probability of the test, so we calculated the adjusted envelope. We chose the option where the intensity was estimated from the data and maintained fixed over all simulations for easier computation (see case (i) in Section \ref{sec:simstudy_inhomog}).

We used the directional quantile envelope test with $s=999$ (the rank envelope test would be too time consuming).
We determined the adjusted level $\alpha^* = 0.084$ and the test is shown in Figure \ref{fig:trees_inhomogPoisson_test} (left).
Thus, the inhomogeneous Poisson process assumption was not rejected by the inhomogeneous $L$-function.
However, the corresponding test using the inhomogeneous $J$-function \citep{Lieshout2011}
indicated the increased clustering of points compared with the inhomogeneous Poisson process using the given bandwidth of 53 m, as shown in Figure \ref{fig:trees_inhomogPoisson_test} (right).
It is possible that the trees that fell during the gusts usually produced gap openings with variable sizes,
which was not captured completely by the inhomogeneous Poisson process where the inhomogeneity
was estimated by a kernel with a fixed width.

\begin{figure}[h!]
\centering
\centerline{ \includegraphics[width=0.5\textwidth]{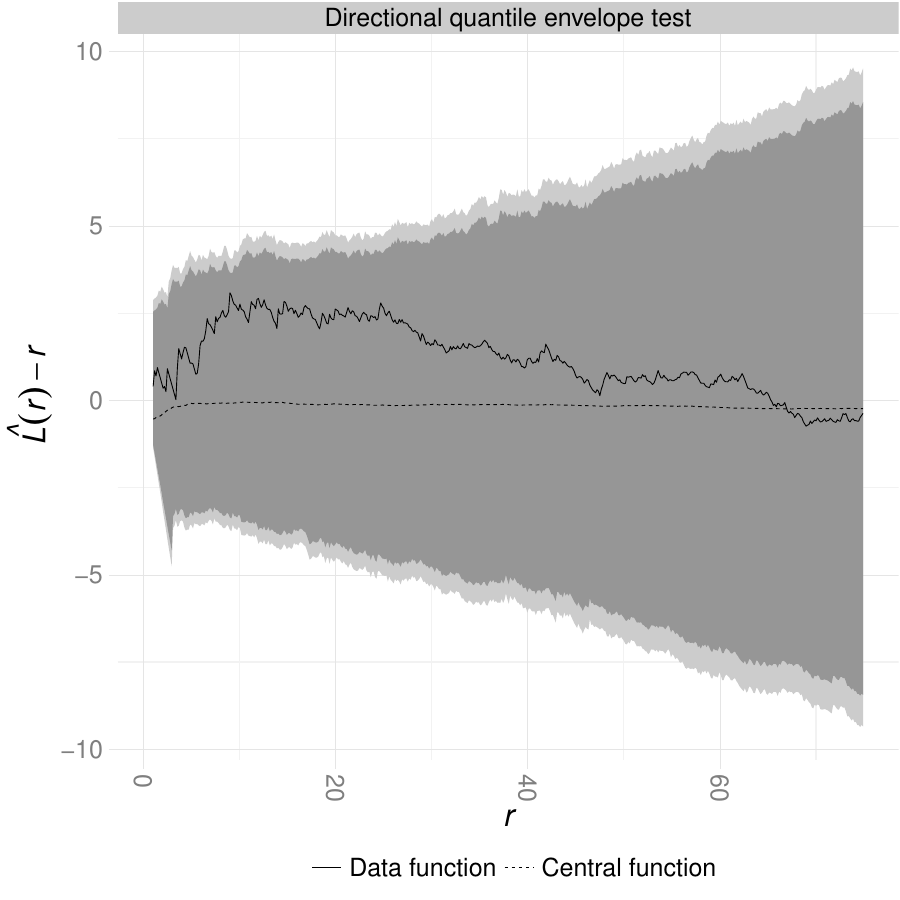}\includegraphics[width=0.5\textwidth]{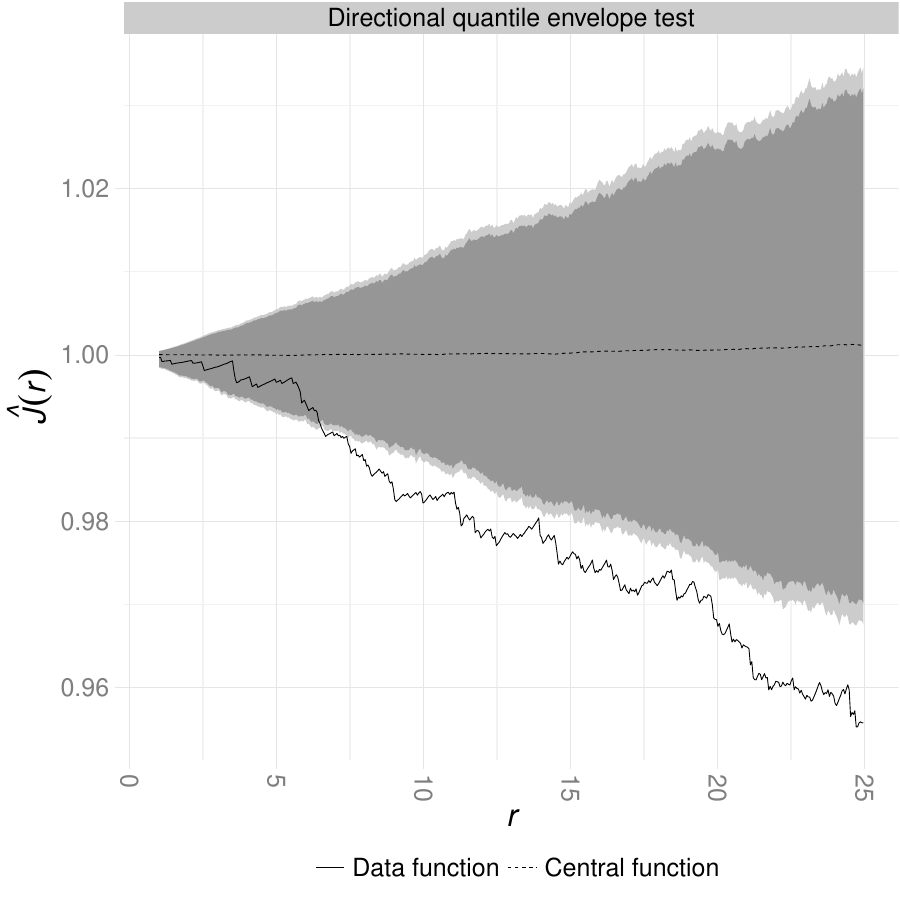}}
\caption{\label{fig:trees_inhomogPoisson_test} Testing the fitted inhomogeneous Poisson process for the trees in Figure \ref{fig:trees} using the inhomogeneous $\hat{L}(r)$ on $I=[1,75]$ (left) and $\hat{J}(r)$ on $I=[1,25]$ (right). The light and dark grey areas show the convervative (plug-in) and 95\% adjusted directional quantile envelopes, respectively, based on $s=999$ simulations. The solid black line denotes the data function and the dashed line represents the (estimated) expectation $T_0(r)$.}
\end{figure}

We also tested the random labelling hypothesis, which states that the marks (heights) in the point pattern are independent of each other. Simulations based on this simple hypothesis were generated by permuting the marks. We used a non-edge corrected estimator of the mark-weighted $L_{\gamma}(r)$-function \citep[see e.g.][]{IllianEtal2008}. Note that the edge correction is not needed, because the function was estimated in a similar manner for the observed and simulated patterns and the non-parametric rank envelope test was used.
The output of the resulting rank envelope test is shown in Figure \ref{fig:treesIndependence}.
The marks were clearly dependent and the dependency was identified for distances ranging from 30 to 100 m (note the cumulative function).
Thus, we can expect trees of similar heights within these distances,
which suggests that soil or other environmental conditions vary in the study area.

\begin{figure}[h!]
\centering
\vspace{0.5cm}
\centerline{ \includegraphics[width=0.5\textwidth]{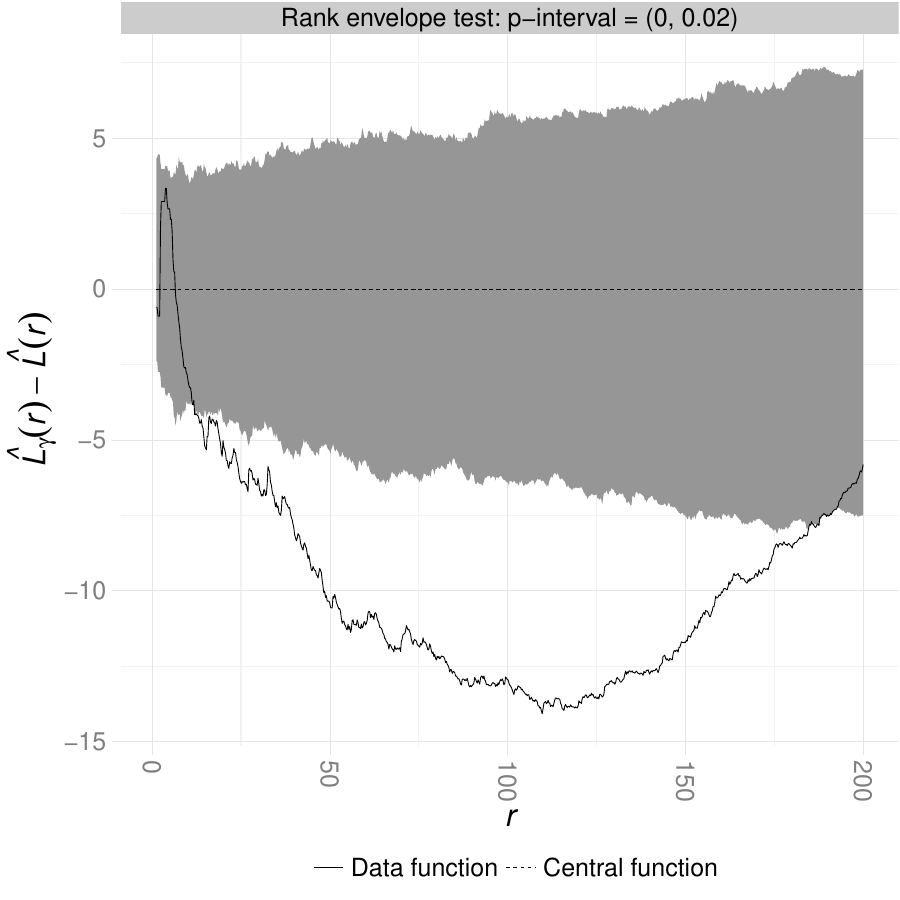}}
\caption{\label{fig:treesIndependence} The rank envelope test for the independence of heights in the pattern shown in Figure \ref{fig:trees} using the centred mark-weighted $\hat{L}_\gamma(r)$ function and $s=2499$ simulations. The grey area shows the 95\% global rank envelope on $I=[0,200]$, the solid black line denotes the data function and the dashed line represents the expectation $T_0(r)$.}
\end{figure}

\section{Discussion and conclusions}\label{sec:discussion}

In this study, we presented new global envelope tests that provide $p$-values and a graphical representation. The $100\cdot(1-\alpha)$\% global envelope complements the test result given by the $p$-values. It shows the distances $r$ where the behaviour of the data function $T_1(r)$ leads to the rejection of the null hypothesis, which helps to understand the reasons of rejection and to suggest alternative models.

In particular, the rank envelope test gives a theoretical basis for
the ``envelope test'' based on
the $k$-th lower and upper curves \eqref{kth_envelopes}.
The key idea is to shift the focus from the values of $T_i(r)$ to the functions
$T_i(r)$ on $r\in I$ and to order the functions by the extreme ranks $R_i$ (see \eqref{rank_measure}).
This allows $p$-values to be calculated in a similar manner to $p$-values in Monte Carlo tests based on scalar discrete statistics. Moreover, the value of $k$ that yields a global envelope of the desired $\alpha$ level can be found easily.

All of the new envelope tests have the desired $\alpha$ level for simple hypotheses.
Furthermore, we proposed a method for constructing adjusted envelopes for composite hypotheses
based on \citet{DaoGenton2014} such that the correct level is also obtained in this case.
A drawback of this adjustment is its extremely high computational load, which may make it infeasible for null models that are heavy to simulate.

The aim of finding the appropriate boundaries $T_\low$ and $T_\upp$ of an envelope is
related to a classical problem in stochastic processes theory, i.e.,
computing the probability that a random process remains between two given boundaries.
This difficult question has been a research focus for many years.
However, progress has been made only in a few cases, e.g.,
for Brownian motion \citep{Durbin1971, Durbin1992, Lerche1986, Daniels1996}
and Gaussian processes \citep{CramerLeadbetter1967, LeadbetterEtal1983, Berman1992, AdlerTaylor2009}.
In the present study, the random functions $T_i$ do not belong to these cases, and thus we resort to Monte Carlo simulation.
This approach allows us to work with any spatial process models,
although the cost of universality may be a high computational load.

The rank envelope test can be recommended if a large
number of simulations $s$ can be performed, where for $\alpha=0.05$, the appropriate number of simulations
$s$ is close to 2500. This is a completely non-parametric
test, which by construction, gives the same importance to all distances $r$
on the chosen interval $I$, and this is a natural choice
if there is no {\em a priori} single interesting distance $r_0$.
Our simulation study also showed that the rank envelope test performs better
than the classical deviation tests, which are the state-of-the-art methods in testing for spatial point patterns.
Most importantly, the rank envelope test is extremely robust against the choice of $I$, unlike the classical deviation tests. Our simulation study (see Figure \ref{fig:choiceL}) suggests that it is in principle possible to choose $I$ wisely such that the classical tests have reasonable power in comparison to the new tests. However, it is typically difficult to make this choice {\em a priori}. For the rank envelope test, the interval $I$ can be selected to be large enough to cover all possibly interesting distances without the risk of losing the power of the test.

The scaled MAD envelope tests can be used with a low number of simulations.
As shown in this study, scaled MAD envelopes based on $s=99$ can give
a reasonable approximation for the rank envelope computed from $s=2499$ simulations.
However, this behaviour is not guaranteed in all cases because the studentized
and directional quantile envelopes rely only on one or two characteristics of the distribution
of $T(r)$, $r\in I$, i.e., the variance or quantiles.
Regardless, our examples show that if a large $s$ is not possible, then
these tests can be considered as alternatives.
However, it should be noted
that many more simulations are required to obtain precise estimates
of the null distribution of a test statistic using
the classical Monte Carlo approach.
In particular, for estimated $p$-values as low as $\hat{p}<0.1$, the
acceptable accuracy requires $s=1000$ or more simulations \citep[see e.g.][]{LoosmoreFord2006}.

To utilize the completely non-parametric rank envelope test
but without a large $s$, a sequential scheme similar to that proposed
by \citet{BesagClifford1991}
can help to make testing computationally cheaper in terms of the
number of simulations $s$ when the data suggest that there is
no evidence for rejecting $H_0$.
Moreover, the rank count test (Section \ref{sec:goldfirst}), which can be seen as a method for breaking ties in the rank envelope test,
can also be used with a low number of simulations $s$, although the envelope representation is then lost. Furthermore, the power of the test with a low $s$ may be lower than that with a high $s$ since the number of simulations $s$ controls the nature of the test, i.e., for small values of $s$, the rank count measure is integral type, whereas it is maximum type for large values of $s$.

Obviously, there are other non-parametric methods for ordering the functions $T_i$
in addition to the extreme ranks $R_i$ or the rank counts $\Nvec_i$.
Thus, we included two orderings (MBD and MHRD) from the area of functional data analysis in our study.
However, these orderings had very low rejection rates for our null and alternative models.
We also note that the construction of $100\cdot(1-\alpha)$\% global envelopes is not straightforward
for these integral type tests.

Intuitively, it is clear that the power of the test depends on the choice of
the function $T(r)$. Sometimes, it may be possible to suggest a sensible
function based on the null and alternative models. However, in the case
where the preferred test function is unclear, it may be desirable to base the
test on several test functions \citep[see][]{Mrkvicka2009}.
Performing the same test with many functions leads to another multiple
testing problem, which can also be solved using the idea of the rank envelope test.
This multiple testing adjustment will be a topic of future research.

\section*{Acknowledgements}
We are grateful to Dietrich Stoyan (TU Bergakademie Freiberg) for his valuable comments and suggestions regarding an earlier version of this paper. We also thank Torsten Mattfeldt (Ulm University) for providing the intramembranous particle data, as well as two anonymous reviewers and editors for useful comments that led to improvements in the paper. M.\,M.\ was financially supported by the Academy of Finland (project numbers 250860, 294162), T.\,M.\ by the Grant Agency of the Czech Republic (Projects No.\ P201/10/0472), P.\,G.\ by an RFBR grant (project number 12-04-01527) and U.\,H. by the Centre for Stochastic Geometry and Advanced Bioimaging, funded by a grant from the Villum Foundation.
The work was started when M.\,M.\ was at Aalto University, Finland.

\appendix

\section*{Appendix}

\section{Functional test statistics}\label{sec:summary_functions}

As functional test statistics, we used non-parametric estimators of the $L$-function
\citep{Ripley1976, Ripley1977, Besag1977} and the $J$-function
\citep{LieshoutBaddeley1996}. Both functions are often used for
detecting the clustering or regularity of point patterns.

Recall that the $J$-function is defined as
\begin{equation*}
 J(r) = \frac{1-G(r)}{1-F(r)} \quad\text{for } r\geq 0 \text{ with } F(r)<1,
\end{equation*}
where $G$ is the nearest neighbour distribution function and $F$ is
the empty space function (or spherical contact distribution
function). Under CSR, it holds that $J(r) \equiv 1$.
As shown by \citet{BaddeleyEtal2000}, it holds that $(1-\hat{G}(r))/(1-\hat{F}(r))
\approx 1$ for the CSR case also if uncorrected estimators of $G(r)$
and $F(r)$ are used. Consequently, we used the uncorrected estimator
of $J(r)$, which is provided in the R library spatstat
\citep{BaddeleyTurner2005}.
Otherwise, for the $G$- and $F$-functions (in Appendix \ref{sec:s}), Kaplan-Meier estimators were used \citep{BaddeleyGill1997}.
For the $L$-function, we employed an estimator with translational edge correction \citep[see e.g.][]{IllianEtal2008}.

\section{Number of simulations in the rank envelope test}\label{sec:s}

The test decision is unambiguous at level $\alpha$ if either $p_+ \leq \alpha$ (clear rejection) or $p_- > \alpha$ (no evidence against $H_0$). The case where $\alpha \in [p_-, p_+)$ is considered to be unsatisfactory. However, it occurs only rarely if the width of the interval,
\begin{equation}\label{p-interval-width}
p_+ - p_- = \frac{1}{s+1}\sum_{i=1}^{s+1} \1(R_i = R_1),
\end{equation}
is small.
This width depends on the number of simulations $s$, the smoothness properties of $T(r)$ and
the value of $R_1$. We investigated the features of these dependencies using simulations.

We observed (Figure \ref{fig:width2}) that
the width of the $p$-interval \eqref{p-interval-width} is largest for $R_1 = 1$.
In order to show how this maximal width varies with respect to the choice of
$T(r)$ and the null model, we recorded the average maximal width
(computed from 20 repetitions) for various functions $T(r)$ and null models with
respect to the number of simulations $s$.
As $T(r)$, we selected non-parametric estimators of the classical summary functions for
point processes ($L$, $F$, $G$, $J$), and as null models, we used Poisson,
Mat\'ern cluster and Strauss point process models
(the simulated models were Poisson(200), MatClust(200, 0.06, 4) and Strauss(350, 0.4, 0.03); see Section \ref{sec:ppmodels}).
The results are summarized in Figure \ref{fig:width}.
The maximal width is clearly of order $s^{-1}$ for all of the functions and models that we studied
(all of the slopes are insignificant).
In fact, the maximal width of the p-interval is generally in the order of $s^{-1}$:
The curves are always evaluated only at finitely many points $r\in I_\fin$, so the number of curves with a given extreme rank cannot be larger than $2|I_\fin|$, where $|I_\fin|$ denotes the number of distances. Therefore, 
\begin{equation}\label{eq:order_s_result}
  \frac{\sum_{i=1}^{s+1}\1(R_i=R_1)}{s+1}\leq 2|I_\fin|/(s+1).
\end{equation}

\begin{center}
\begin{figure}[htbp]
\includegraphics[width=4.5cm,angle=0]{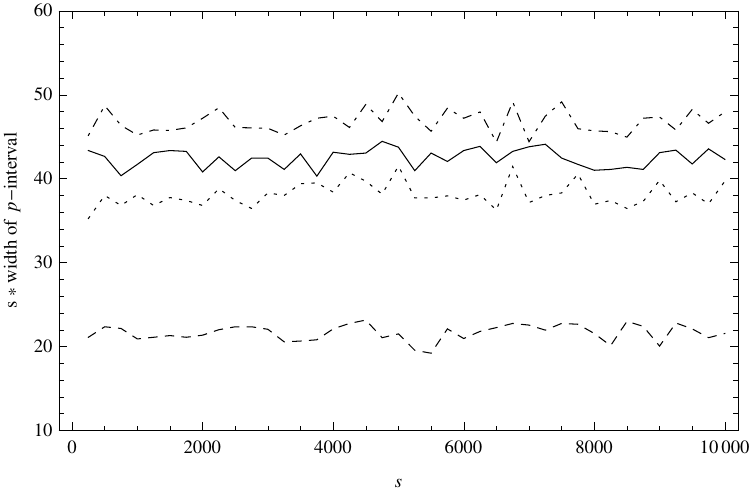}
\includegraphics[width=4.5cm,angle=0]{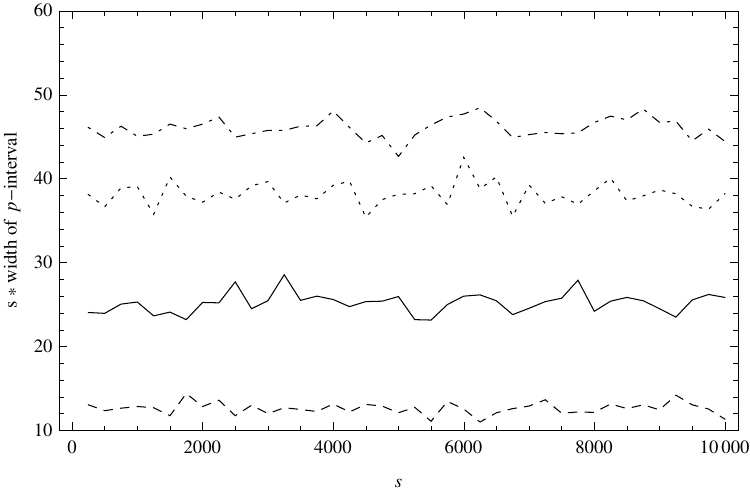}
\includegraphics[width=4.5cm,angle=0]{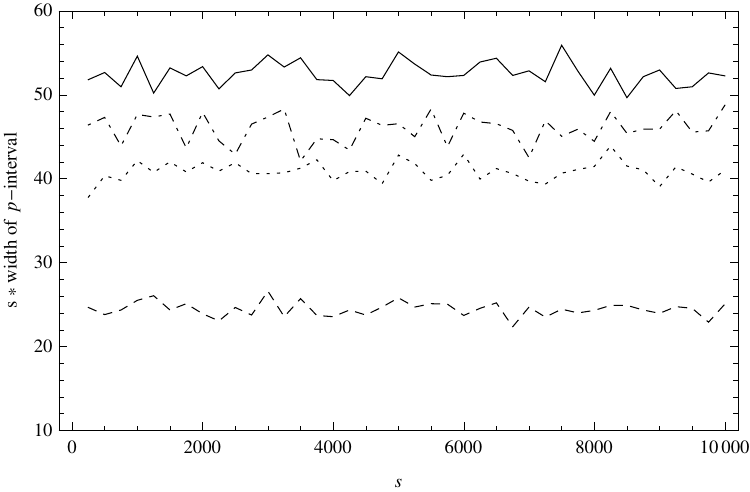}
\caption{\label{fig:width}Average maximal width of the $p$-interval multiplied by the number of simulations $s$ with respect to $s$ for: $L$- (solid line), $F$- (dashed line), $G$- (dotted line) and $J$-functions (dashed dotted line). {\em Left}: Poisson null model; {\em middle}: Mat\'ern cluster model; {\em right}: Strauss model.}
\end{figure}
\end{center}

For the common choice $\alpha=0.05$, $\alpha\in [p_-,p_+)$ only ``rarely''
if the width of the $p$-interval is less than or equal to 0.01.
Since all of the functions shown in Figure \ref{fig:width} are bounded approximately by 50,
the width $0.01$ for $R_1=1$ was achieved for $s=5000$ simulations in the cases considered.

In fact, the width only needs to be narrow around the significance level $\alpha$.
Figure \ref{fig:width2} shows the widths \eqref{p-interval-width} for
$R_1=1, 2, 3$ and $4$ for the Poisson null model using the $L$- and $J$-functions
(we also checked them for the other functions and null models).
With $s=2500$, the $p$-intervals corresponding to values $R_1=1, 2, 3, 4$ are approximately $(0, 0.02)$, $(0.02, 0.032)$, $(0.032, 0.042)$ and $(0.042, 0.052)$,
so the values $R_1=3$ or $4$ correspond to $p$-values close to $\alpha=0.05$.
Therefore, we can allow the width to be larger than 0.01 for $R_1=1$ (and 2) and
require the precision 0.01 only for $R_1=3$ and $4$.
We conclude that, for $\alpha = 0.05$, $s\approx 2500$ simulations appear to be sufficiently many.

We also found that the maximal width related to the $F$-function, which tends to be rather smooth, is smaller than those related to the other functions considered, as shown in Figure \ref{fig:width}. 
The differences in the maximal widths are small between the three different null models, except 
the $L$- and $F$-functions appear to be smoother with the Mat\'ern cluster model than those with the other two models resulting in smaller maximal widths.

\begin{figure}[htbp]
\centering
\includegraphics[width=0.5\textwidth]{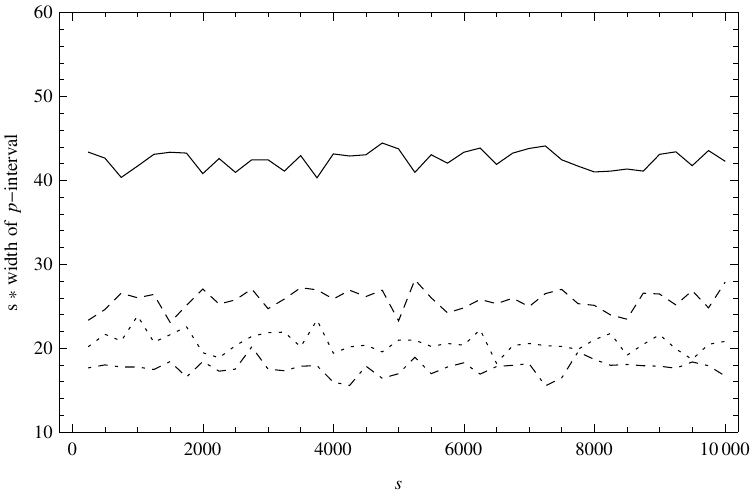}\includegraphics[width=0.5\textwidth]{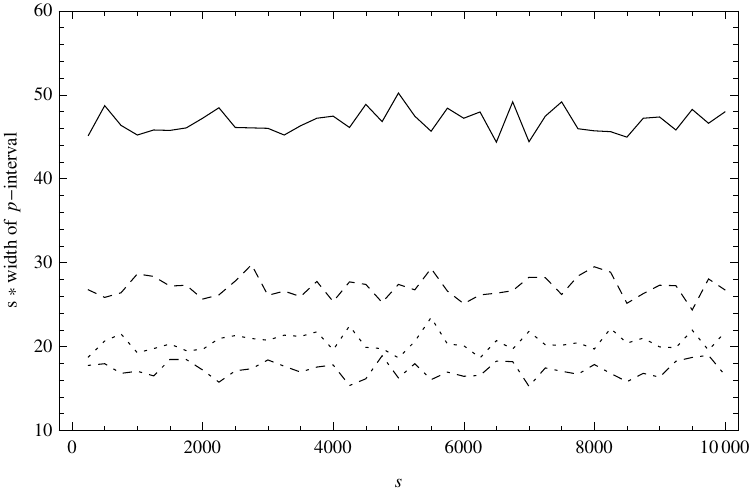}
\caption{\label{fig:width2}Average width of the $p$-interval multiplied by the number of simulations $s$ with respect to $s$ for the Poisson null models for $R_1=1$ (solid line), 2 (dashed line), 3 (dotted line) and 4 (dashed dotted line). {\em Left}: $L$-function; {\em right}: $J$-function.}
\end{figure}

The width \eqref{p-interval-width} also depends on the number of $r$-values;
thus, Figure \ref{fig:width3} shows that for all of the functional statistics considered,
it behaves logarithmically with respect to the number of $r$-values, $|I_\fin|$.
Furthermore, the experiments not shown here indicate that the width grows as the width of the interval $I$ increases. 
The selected intervals cover the overall common interaction domain.
In any case, we recommend to choose the interval $I$ so that it covers the domain of all potential interactions (see results in Section \ref{sec:simstudy_I}) and we recommend to take the number of $r$-values sufficiently large so that behaviour of functions on $I$ are captured adequately.  
These choices guarantee that the test can have good power against many alternative models.

Given all the reasons above, we recommend the use of at least $s=2500$ simulations for testing at the significance level  $\alpha=0.05$.

\begin{center}
\begin{figure}[htbp]
\centering
\includegraphics[width=0.5\textwidth]{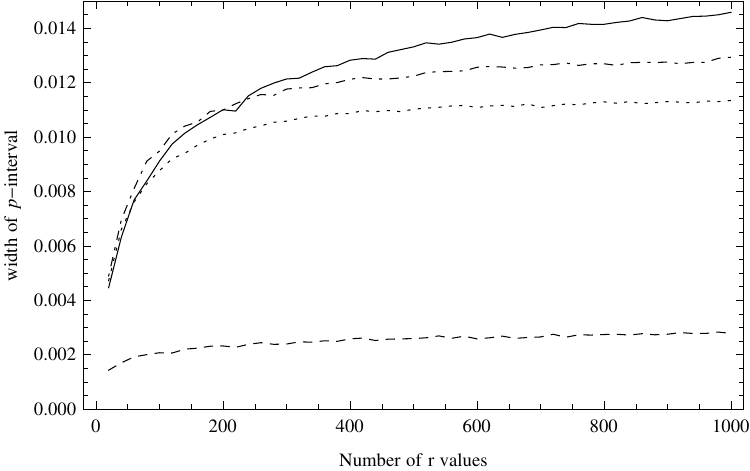}
\caption{\label{fig:width3}Average maximal width of the $p$-interval with respect to the selected number of $r$ values for the Poisson null model for $L$- (solid line), $F$- (dashed line), $G$- (dotted line) and $J$-functions (dashed dotted line).}
\end{figure}
\end{center}

\section{Proof of Theorem \ref{thm:gf-envelope}}\label{sec:proof_thm6.2} 

For the proof, we use the relations between the extreme ranks and rank counts, which follow from the definition.
\begin{align}
  R_i < R_{i^\prime} &\implies \Nvec_i \prec \Nvec_{i^\prime}\label{eq:extreme less imply goldfirst}\\
  \Nvec_i = \Nvec_{i^\prime} &\implies R_i = R_{i^\prime}\label{eq:goldfirst eq imply extreme}
\end{align}
To prove \eqref{eq:env smaller gf}, 
we show that
\begin{equation}\label{eq:env implies gf}
  \varphi_{\env,\cons}(T_1)=1\implies \varphi_N(T_1)=1
\end{equation}
for all possible arguments $T_1$.

Note first that $\varphi_{\env,\cons}(T_1)=1$ iff the curve $T_1$ leaves the envelope in one of the discrete $r$-values. As stated in Theorem \ref{thm:envelope-vs-pinterval}, 
this is equivalent to $p_+ \leq \alpha$, and with the definition of $p_+$ (see \eqref{eq:pvalue_interval}), 
we have
\begin{align*}
  \varphi_{\env,\cons}(T_1)=1 &\Longleftrightarrow \sum_{i=2}^{s+1}\1(R_i > R_1) \geq (s+1)(1-\alpha)
  \intertext{and, by \eqref{eq:extreme less imply goldfirst}, $\sum_{i=2}^{s+1}\1(\Nvec_i \succ \Nvec_1)\geq\sum_{i=2}^{s+1}\1(R_i > R_1)$, thus}
  \varphi_{\env,\cons}(T_1)=1&\implies \sum_{i=2}^{s+1}\1(\Nvec_i\succ\Nvec_1) \geq (s+1)(1-\alpha)
    \\&\Longleftrightarrow \sum_{i=2}^{s+1}\1(\Nvec_i\precsim\Nvec_1) \leq \alpha(s+1)-1
  \\&\Longleftrightarrow \sum_{i=1}^{s+1}\1(\Nvec_i\precsim\Nvec_1) \leq \alpha(s+1)\\&\Longleftrightarrow \varphi_N(T_1)=1.
\end{align*}
In order to prove \eqref{eq:env goesto gf}, 
we can consider the difference $\varphi_N(T_1)-\varphi_{\env,\cons}(T_1)$. For the realizations of $T_1$ where both tests yield the same result, the difference vanishes. We can also see that $\varphi_N(T_1)\geq\varphi_{\env,\cons}(T_1)$ for all $T_1$. We still need to examine the cases where $\varphi_N(T_1)=1$ but $\varphi_{\env,\cons}(T_1)=0$.
This is only possible when $R_1=k_\alpha$: If $R_1<k_\alpha$, then $\varphi_{\env,\cons}(T_1)=1$. On the other hand, if we assume that $R_1>k_\alpha$, then by the definition of $k_\alpha$
and \eqref{eq:extreme less imply goldfirst},
\begin{equation*}
  \sum_{i=2}^{s+1}\1(R_i<R_1) >\alpha(s+1) \ \implies\
  \sum_{i=2}^{s+1}\1(\Nvec_i\prec\Nvec_1) >\alpha(s+1) \ \implies\
 \sum_{i=2}^{s+1}\1(\Nvec_i\precsim\Nvec_1) >\alpha(s+1)
\end{equation*}
which is the same as $\varphi_N(T_1)=0$.
To summarize,
$\varphi_N(T_1)-\varphi_{\env,\cons}(T_1) =1$ only if $R_1=k_\alpha$, and thus
\[
  \E\big(\varphi_N(T_1)-\varphi_{\env,\cons}(T_1))\leq \E\big(\1(R_1=k_\alpha)\big)=\Pr(R_1=k_\alpha).
\]

Monte Carlo goodness-of-fit tests exploit the fact that for any measurable event $A$,
\[
  \Pr(T_1,\dots,T_{s+1}\in A \mid \{T_1,\dots,T_{s+1}\})=\Pr(T_{\sigma(1)},\dots,T_{\sigma(s+1)}\in A\mid \{T_1,\dots,T_{s+1}\})
\]
for any permutation $\sigma$ and any measurable $A$, particularly for events $A$ that concern rankings of $T_i$, $i=1,\dots, s+1$, such as $A=\{R_1=k_\alpha\}$.
As a consequence of equiprobability for the permutations,
\[
\Pr(R_1=k_\alpha\mid T_1=t_1,\dots,T_{s+1}=t_{s+1}) = \frac{\sum_{i=1}^{s+1}\1(R_i=k_\alpha)}{s+1}.
\]
The rest follows from Eq.\ \eqref{eq:order_s_result}, which yields the proposed limit.

\section{Adjusted global rank envelope for a composite hypothesis}\label{sec:adjenv}

The adjusted global rank envelope for a composite hypothesis can be constructed as follows.
\begin{enumerate}
 \item Estimate the parameters $\thetavec$ from the data pattern $x_1$.
 \item Simulate the null model $s$ times with the estimated parameters $\hat{\thetavec}_1$.
Let $x_2(\hat{\thetavec}_1), \dots, x_{s+1}(\hat{\thetavec}_1)$ be the simulated patterns. Calculate the functional test statistic $T(r)$ for $x_1,x_2(\hat{\thetavec}_1), \dots, x_{s+1}(\hat{\thetavec}_1)$.
 \item Calculate the extreme rank $R_1$ based on the ranking of the functional statistics $T_1(r),\dots,T_{s+1}(r)$.
 \item Run steps (i)--(iii) for the $s$ simulated patterns as the data patterns and obtain the plug-in ranks $R_{1,i}, i=2,\dots,s+1$, as follows.
  \begin{itemize}
   \item Estimate the parameters $\thetavec$ from the pattern $x_i(\hat{\thetavec}_1)$.
   \item Simulate the null model $s$ times with the estimated parameters $\hat{\thetavec}_i$. Calculate the function $T(r)$ for $x_i(\hat{\thetavec}_1)$ and the new simulated patterns.
   \item Calculate the extreme rank $R_1$ for $x_i(\hat{\thetavec}_1)$ and denote it as $R_{1,i}$.
  \end{itemize}
 \item Determine the $\alpha$-quantile, i.e.,\ the critical rank $k^*_\alpha$, from the sample $(R_{1,2},..., R_{1,s+1})$,
\begin{equation*}
  k_\alpha^* = \max\left\{k:\  \sum_{i=2}^{s+1}\1(R_{1,i} < k) \leq \alpha s\right\}.
\end{equation*}
	
 \item The adjusted rank envelope is given by the curves $T_\low^{(k_\alpha^*)}$ and $T_\upp^{(k_\alpha^*)}$ (see \eqref{kth_envelopes}). 
\end{enumerate}

\end{document}